\newif\iflong
\newif\ifshort
\newcommand{\bigoh}{\mathcal{O}}    
\newcommand{\tempAsgn}{\ensuremath{\pi_{\mathrm{SOL}}}}
\newcommand{\problem}{envy-free graph cutting}
\newcommand{\problemshort}{\textsc{EF-GC}}
\newcommand{\contigcake}{\textsc{Contiguous Cake Cutting}}
\newcommand{\EF}{\textsc{EF}}
\newcommand{\NPP}{\textsc{Number Partitioning}}
\newcommand{\problemvd}{envy-free vertex-disjoint graph cutting}
\newcommand{\problemvdshort}{\textsc{EF-VDGC}}
\newcommand{\val}{\ensuremath{u}}
\newcommand{\III}{\mathcal{I}}
\renewcommand{\epsilon}{\varepsilon}
\DeclareMathOperator{\sign}{sign}
\DeclareMathOperator{\agentPieces}{Pieces}
\newtheorem{theorem}{Theorem}
\newtheorem{observation}[theorem]{Observation}
\newtheorem{lemma}[theorem]{Lemma}
\newtheorem{proposition}[theorem]{Proposition}
\title{The Complexity of Envy-Free Graph Cutting}
\author{
  Argyrios Deligkas$^1$
  \and
  Eduard Eiben$^1$
  \and
  Robert Ganian$^2$
  \and
  Thekla Hamm$^3$
  \and
  Sebastian Ordyniak$^4$
  \affiliations
  $^1$Royal Holloway, University of London, UK\\
  $^2$TU Wien, Austria\\
  $^3$Eindhoven University of Technology, the Netherlands\\
  $^4$University of Leeds, UK\\
  \emails
  \{argyrios.deligkas,eduard.eiben\}@rhul.ac.uk, \{rganian, thekla.hamm,sordyniak\}@gmail.com
}
\begin{document}

\maketitle

\begin{abstract}
We consider the problem of fairly dividing a set of heterogeneous divisible resources among agents with different preferences. 
We focus on the setting where the resources correspond to the edges of a connected graph, every agent must be assigned a connected piece of this graph, and the fairness notion considered is the classical envy freeness.
The problem is NP-complete, and we analyze its complexity with respect to two natural complexity measures: the number of agents and the number of edges in the graph. While the problem remains NP-hard even for instances with 2 agents, we provide a dichotomy characterizing the complexity of the problem when the number of agents is constant based on structural properties of the graph. For the latter case, we design a polynomial-time algorithm when the graph has a constant number of edges.
\end{abstract}

\section{Introduction}
Cake cutting is, without a doubt, among the most influential problems in social choice and has received significant attention in computer science, mathematics, and economics~\cite{brams1996fair,robertson1998cake,moulin2004fair,procaccia2013cake}. The cake corresponds to a heterogeneous divisible resource that is to be divided between a set of $n$ agents with different preferences in a ``fair'' manner. In this paper, the fairness concept we focus on is {\em envy-freeness}, where every agent prefers the piece of the cake they get allocated over the piece any other agent gets.

In the classical formulation of the problem, the cake is represented as an interval and the preference of every agent over the cake is given via a {\em valuation} over any subinterval of the cake. In this setting, \citeauthor{dubins1961cut}~(\citeyear{dubins1961cut}) showed that an envy-free allocation {\em always} exists for arbitrary valuations for the agents, where the piece an agent gets consists of a countable number of subintervals. \citeauthor{stromquist1980cut}~(\citeyear{stromquist1980cut}) strengthened this result by removing the possibility of pieces that consist of a ``union of crumbs''. In fact, he showed that there is an envy-free allocation where the piece of every agent is {\em contiguous}, i.e.,  it is a {\em single} subinterval.

Recently, \citeauthor{bei2021dividing}~(\citeyear{bei2021dividing}) considered a generalized version cake cutting on {\em graphs}. This augmented model allows to capture more general scenarios which cannot be represented by splitting an interval into connected pieces---consider, e.g., the task of splitting road or railway networks between companies. We note that these settings do not always give rise to large graphs: for instance, the ICE train network in Germany can be modeled as a graph with merely 23 edges. Or, for yet another example, suppose that one wants to schedule time on a high-performance computing cluster between teams (agents). Suppose
furthermore that the day is partitioned into, e.g., four
time-slots, with each time-slot being less or more desirable for
different agents. This setting, too, can easily be modeled using a graph with four edges.
In these as well as other examples, it is still desirable to ensure that each agent receives a connected piece of the graph, but the natural model for the cake is a graph where each individual edge may be split and behaves as a single, uniform piece.

Observe that depending on the setting, it may either be the case that a vertex is allocated to only a single agent (as in the case of junctions in the division of road networks over maintenance companies), or that a vertex merely acts as a bridge between edges and may be used by multiple agents (as in the case of train stations in division of railway networks over rail companies). We call the former setting ``{\em graph cutting}'' and the latter ``{\em vertex-disjoint graph cutting}''.

While both fair division problems always admit a contiguous envy free solution when the underlying graph is a path (since they are special cases of the classical cake cutting problem), this is no longer true for more general graphs. In fact, it is not hard to construct graphs with no envy-free solutions; this holds even for stars with three leafs and two agents  with identical valuations. Hence, in the setting studied here, the natural task is to {\em decide if a solution exists, and if it does to efficiently compute one}.

\smallskip
\noindent \textbf{Our Contributions.}\quad
In this work, we explore the frontiers of tractability for both variants of graph cutting under the envy free solution concept; we refer to these problems as \problemshort\ and \problemvdshort, respectively. While it is not difficult to show that both problems are \NP-complete in general, here we analyze the problem with respect to two of the most natural complexity measures that characterize the input: the number of agents and the number of edges. 

When considering the number of agents (i.e., $n$), we begin by extending the \NP-completeness lower bounds for both variants to the case of $n=2$ even on very simple graphs---as simple as two vertices plus a matching (Theorem~\ref{the:matchNPhard}). 
However, here we can show that the two variants do sometimes behave differently: while \problemshort\ is also \NP-hard on stars when $n=2$ (Theorem~\ref{the:NP-ps}), we design a polynomial-time algorithm for \problemvdshort\ on trees when $n$ is upper-bounded by an arbitrary but fixed constant (Theorem~\ref{thm:No_agents}).
In order to achieve tractability for \problemshort, we need to restrict ourselves to instances with a constant number of agents and where the graph is a tree with a constant maximum degree (Theorem~\ref{thm:treedegreeP}).

\begin{table}[ht]
  \centering
  \begin{tabular}{cccc}    \toprule
    tw & $\Delta$ & \problemshort{} & \problemvdshort{} \\ \midrule
    2 & 3 & \NP-c (Th.~\ref{thm:twNPh}) & \NP-c (Th.~\ref{thm:twNPh}) \\
    2 & 2 & \P\ (Th.~\ref{thm:cycles}) & \P\ (Th.~\ref{thm:cycles}) \\
    1 & arbitrary & \NP-c (Th.~\ref{the:NP-ps}) & \P\ (Th.~\ref{thm:No_agents}) \\
    1 & constant & \P (Th.~\ref{thm:treedegreeP}) & \P\ (Th.~\ref{thm:No_agents}) \\
\end{tabular}
\caption{Complexity of \problemshort{} and \problemvdshort{} for a constant number of agents
  for different restrictions on the treewidth (tw) and the maximum degree ($\Delta$). All \NP{}-completeness (\NP-c) results hold already for only 2 agents.}
\label{tab:comptwde}
\end{table}

In fact, we prove this is the best one can do from this perspective. Both problems become \NP-hard on instances with two agents and graphs of maximum degree 3 which are ``almost trees''---in particular, have treewidth 2 (Theorem~\ref{thm:twNPh}). On the other hand, we show that both problems are polynomial-time solvable on cycles, which are graphs of maximum degree and treewidth $2$ (Theorem~\ref{thm:cycles}); this provides a complete dichotomy for the complexity of both problems with respect to treewidth and maximum degree (see Table~\ref{tab:comptwde}).

Next, we target instances where the number of edges is bounded by a constant. As the main technical contribution of this article, we show that both problems under consideration become polynomial-time tractable under this restriction (Theorem~\ref{thm:No_edges_main}). The algorithm is non-trivial and combines insights into the structure of a hypothetical solution with branching techniques, linear programming subroutines and insights from multidimensional geometry.

\smallskip
\noindent \textbf{Related Work.}\quad
Bei and Suksompong~(\citeyear{bei2021dividing}) studied graph cutting under the fairness notions of proportionality and equitability; this was the first paper that considered a graph structure with {\em divisible} items.
For indivisible items there are several different graph-based approaches. In the most common modelling scenario the items correspond to the vertices of the graph and each agent must get \iflong 
a connected subgraph~\cite{bei2019connected,bilo2021almost,BouveretCEIP17,DeligkasEGHO21,ESS21graphical,igarashi2019pareto,suksompong2019fairly}. 
\fi
\ifshort
a connected subgraph~\cite{bei2019connected,bilo2021almost,BouveretCEIP17,DeligkasEGHO21,ESS21graphical,igarashi2019pareto}. 
\fi

A different line of work uses graphs to denote the relationships between the agents, where an agent compares their bundle only against the bundles of the agents they are connected with~\cite{AbebeKP17,aziz2018knowledge,BeiQZ17,DBLP:conf/soda/BeiSWZZZ20,chevaleyre2017distributed,EibenGHO20}.

\iflong
There is a long line of papers for classical cake cutting and its extensions.
\cite{edward1999rental} gave an alternative proof of existence of contiguous envy free solutions using Sperner's lemma; this proof was used in~\cite{deng2012algorithmic} to place the approximate version of the computational problem in PPAD. Recently,~\cite{DBLP:journals/tamm/Segal-HaleviS21} extended the existential result for groups of agents and~\cite{DBLP:conf/ijcai/HosseiniIS20,IM21-layered} extended the result for multilayered cakes. Polynomial-time algorithms that compute approximately envy-free contiguous allocations were presented in~\cite{arunachaleswaran2019fair,deng2012algorithmic,goldberg2020contiguous}, while~\cite{seddighin2019expand} provides a polynomial-time algorithm for the problem for a very restricted class of valuations; to the best of our knowledge this is the only positive result for exact solutions. On the other hand, non-contiguous envy free cake cutting is easy to solve~\cite{cohler2011optimal,kurokawa2013cut,lipton2004approximately}.
\fi

\ifshort
In addition to the above, there are many other works that study variants of cake cutting~\cite{ElkindSS21separation,ElkindSS21land,segal2016waste,MenonL17,marenco2014envy,caragiannis2011towards,bei2021price,balkanski2014simultaneous,aumann2015efficiency,branzei2015dictatorship}.
\fi
\iflong
In addition to the above, there are many other works that study variants of cake cutting~\cite{ElkindSS21separation,ElkindSS21land,segal2016waste,MenonL17,marenco2014envy,caragiannis2011towards,branzei2017query,branzei2017communication,branzei2015dictatorship,bei2021price,balkanski2014simultaneous,aumann2015efficiency}.
\fi

\section{Preliminaries and Problem Definition}

\noindent \textbf{Notation.}\quad
For rational numbers \(i,j \in \mathbb{Q}\), we use the standard notation \([i,j] = \{k \in \mathbb{Q} \mid i \leq k \leq j\}\) for intervals and for an interval \(I = [i,j]\) we denote its length by \(|I| = \max\{j-i, 0\}\).
We denote the set of all non-negative rational numbers by \(\mathbb{Q}^+\).
\iflong
Further, we refer to the handbook by~\cite{Diestel12} for standard graph terminology.
\fi
Throughout the paper we will consider simple undirected graphs.

\smallskip
\noindent \textbf{Graph Cutting.}\quad
Consider a set $A$ of $n$ agents, a connected graph \(G = (V,E)\), and for each agent \(a \in A\) an edge weight function \(\val_a : E \to \mathbb{Q}^+\) over the edges of \(G\); \(\val_a\) is the \emph{utility} (function) of agent \(a\), and for a specific edge \(e \in E\) we call \(\val_a(e)\) the \emph{utility} of \(a\) for \(e\).

A \emph{piece of an edge} \(e\) is a tuple \((e, I)\) where \(I \subseteq [0,1]\) is a possibly empty interval.
We assume an arbitrary but fixed order \(V\) of the vertices of \(G\), and say that pieces \((e,I)\) and \((e',I')\) of two different edges \(e,e' \in E\) are \emph{adjacent} if
\begin{itemize}
	\item there is some \(v \in V\) such that \(e = uv\) and \(e' = vu'\) (i.e.\ \(e\) and \(e'\) are adjacent in \(G\)), and
	\item if \(u\) has a smaller index in the ordering of \(V\) than \(v\), then \(1 \in I\) and \(0 \in I\) otherwise, and
	\item similarly, if \(u'\) has a smaller index in the ordering of \(V\) than \(v\), then \(1 \in I'\) and \(0 \in I'\) otherwise.
\end{itemize}
A \emph{piece} of \(G\) is a collection \(P\) of pieces of edges of \(e\) such that for every pair of pieces of edges \((e_0,I_0), (e_\ell,I_\ell)\) in \(P\) there is a sequence of pieces of edges \((e_1,I_1), \dotsc, (e_{\ell-1},I_{\ell-1})\) in \(P\) such that for all \(j\) with \(0 \leq j < \ell\), \((e_j,I_j)\) and \((e_{j + 1},I_{j + 1})\) are adjacent.
An example is provided in Figure~\ref{fig:pieces}.

The utility of an agent \(a\) for a piece \(P\) is given as \(\val_a(P) = \sum_{(e,I) \in P}|I| \cdot \val_a(e)\).
Note that \(\{(e,[0,1]) \mid e \in E\}\) is also a piece of \(G\).
As is standard, our algorithms will assume \emph{normalized} utilities, i.e., \(\val_a(G) = 1\) for all \(a \in A\)\footnote{Instances with non-normalized utilities can be trivially transformed into equivalent ones with normalized utilities.}.

\begin{figure}
		\includegraphics[scale=0.95]{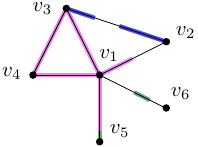}
		\begin{minipage}[b]{0.58\columnwidth}\small
			Edge pieces:\\
			pink:\((v_1v_2,[0,\frac{1}{2}])\),\((v_1v_3,[0,1])\), \((v_1v_4,[0,1])\),~\((v_1v_5,[0,\frac{5}{6}])\),~\((v_3v_4,[0,1])\);\\
			blue: \((v_2v_3,[0,\frac{1}{2}])\), \((v_2v_3,[\frac{3}{4},1])\);\\
			green: \((v_1v_5,[\frac{5}{6},1])\), \((v_1v_6,[\frac{1}{2},\frac{3}{4}])\)\\
			Only the pink edge pieces form a piece of the graph together; they are adjacent via \(v_1\), $v_3$ and $v_4$.
		\end{minipage}
		\caption{Examples of pieces of edges and of a graph.\label{fig:pieces}}
\end{figure}

A \emph{partition} of \(G\) into pieces is a set \(\Pi\) of pieces such that for every edge \(e \in E\) and every real $0\leq \alpha\leq 1$, there is precisely one piece \(P \in \Pi\) and one $(e,I)\in P$ such that $\alpha\in I$.
In some cases it is also necessary to allocate each vertex to a single piece: a partition of $G$ into pieces is \emph{vertex-disjoint} if all pieces of edges containing a vertex belong to the same piece of the graph.
See Figure~\ref{fig:partitions} for an example.
\begin{figure}
	\vfil
	\includegraphics[page=2,scale=0.95]{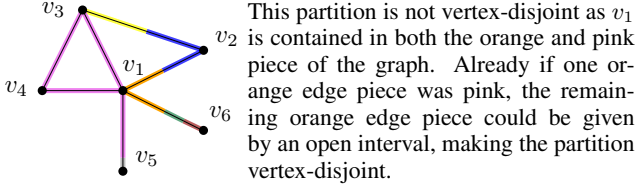}
	\begin{minipage}[b]{0.6\columnwidth}
		\small
		This partition is not vertex-disjoint as \(v_1\) is contained in both the orange and pink piece of the graph.
		Already if one orange edge piece was pink, the remaining orange edge piece could be given by an open interval, making the partition vertex-disjoint.
	\end{minipage}
\caption{Non-vertex-disjoint partition of a graph into seven connected pieces of different colors.\label{fig:partitions}}
\end{figure}

Finally we are ready to define our problem of interest. 
In (piecewise) \problem\ (\problemshort) we are given agents \(A\), graph \(G\) and utilities \(\val_a : E \to \mathbb{Q}^+\) for each agent \(a \in A\).
Our task is to construct a partition $\Pi$ of $G$ into pieces and a bijection (called an \emph{assignment}) \(\pi : A \to \Pi\) such that for every pair of agents \(a,a' \in A\), it holds that \(\val_a(\pi(a)) \geq \val_a(\pi(a'))\).
This condition is commonly referred to as \emph{envy-freeness} in assignment and allocation problems, and when it is violated for some \(a\) and \(a'\), we say that \(a\) \emph{envies} \(a'\).

We can analogously define the problem of (piecewise) \problemvd\ (\problemvdshort) by additionally requiring $\Pi$ to be vertex-disjoint.
In fact, by replacing each vertex in an instance of \problemshort\ with a clique of size $|E(G)|$, we obtain a simple reduction from the former problem to \problemvdshort.

\begin{observation}
\label{obs:reduce}
	\problemshort\ can be reduced to \problemvdshort\ in polynomial time. 
	\end{observation}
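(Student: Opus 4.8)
The plan is to exhibit an explicit polynomial-time reduction that, given an instance of \problemshort, produces an equivalent instance of \problemvdshort. As hinted in the statement, the core idea is to replace each vertex of the original graph with a gadget that forces the vertex-disjointness constraint to become vacuous, so that the two notions of ``connected piece'' coincide. First I would take an instance $(A, G, (\val_a)_{a \in A})$ of \problemshort\ with $G = (V,E)$, and construct a new graph $G'$ as follows: for every vertex $v \in V$, introduce a clique $K_v$ on $|E|$ fresh vertices, and for every edge $e = uv \in E$, attach $e$ by connecting one (previously unused) vertex of $K_u$ to one (previously unused) vertex of $K_v$. Since each clique has exactly $|E|$ vertices and $v$ has degree at most $|E|$ in $G$, there are always enough distinct attachment points, so no two original edges share an endpoint in $G'$. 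The utilities are extended by setting $\val_a(e') = 0$ for every clique edge $e' \in E(K_v)$ and keeping $\val_a(e) = \val_a(e)$ for the surviving copies of original edges; this preserves normalization since the clique edges contribute nothing.

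The key claim is that $(A, G', (\val_a)_{a\in A})$ admits a vertex-disjoint envy-free partition if and only if the original instance admits an envy-free partition. I would argue both directions by translating solutions across the reduction. For the forward direction, given any envy-free partition $\Pi$ of $G'$, I would restrict attention to the pieces of the original (surviving) edges, ignoring how the zero-weight clique edges are distributed; because the clique edges carry no utility, deleting them from each piece does not change any agent's valuation, and the induced partition of $G$ is envy-free with the same utility profile. For the reverse direction, given an envy-free partition $\Pi$ of $G$, I would lift it to $G'$: each original edge retains its interval, and the clique edges inside each $K_v$ can be absorbed into the pieces so that no vertex of $G'$ is shared between two distinct pieces.

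The main obstacle, and the step that deserves the most care, is verifying that the lifted partition can be made \emph{genuinely vertex-disjoint} while keeping every piece \emph{connected}. In the original instance a vertex $v$ may be a cut point shared by several pieces, so naively the clique $K_v$ would have to be split among those pieces; the point of making $K_v$ a clique on $|E|$ vertices is precisely that its edges give enough connective freedom to route each original edge-piece to its own private attachment vertex, and to parcel out the interior clique edges so that each piece claims a disjoint set of clique vertices without disconnecting anything. Concretely, I would assign each surviving edge's two attachment vertices to the piece containing that edge, and then distribute the remaining clique vertices (and the zero-weight edges among them) so that each piece that touches $K_v$ owns a connected subset of $K_v$ and distinct pieces own vertex-disjoint subsets; the completeness of $K_v$ guarantees such a connected, vertex-disjoint partition of $K_v$ always exists. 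Once this routing is established, envy-freeness transfers verbatim because the clique edges have zero weight, and the construction is clearly polynomial in $|E|$ and $|V|$, completing the reduction.
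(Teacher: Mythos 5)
Your construction is exactly the one the paper intends---the text preceding the observation gives only the one-line recipe ``replacing each vertex \ldots with a clique of size $|E(G)|$'' and supplies no further proof---and your elaboration of both directions is correct. In particular, you rightly identify the only delicate point (parcelling out the zero-utility clique edges so that each piece meeting $K_v$ gets a connected, pairwise vertex-disjoint share), which is handled exactly as you sketch, using half-open subintervals on cross edges so that no edge piece of one agent contains an attachment vertex owned by another.
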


\smallskip
\noindent \textbf{Bounding the Number of Cells in Metric Spaces.}\quad
One prominent tool our main algorithm for solving \problemshort\ and
\problemvdshort\ uses is a theorem that applies to the behavior of
polynomials in higher-dimensional spaces. To provide a high-level
intuition,
consider a $d$-dimensional space that is cut into regions by $s$-many
hyperplanes or, more generally, ``well-behaved cuts'' defined by
bounded-degree polynomials.
The combination of these cuts splits the whole space into ``cells'', each consisting of points that lie on the same side of each of the $s$-many cuts. While the trivial bound for the number of these cells is $2^s$, it can be shown that the number of such cells is in fact polynomial in $s$ for fixed $d$ and that representatives of these cells can be computed efficiently. 
\ifshort
The result we use here is formalized in the book of \citeauthor{BasuPR06}~(\citeyear{BasuPR06}, \iflong Theorem\fi \ifshort Thm.\fi 13.22), see also Simonov et al.~(\citeyear{SimonovFGP19}). ($\star$)
\fi

\iflong
Let $V=\mathbb{R}^d$ and let us denote the ring of polynomials over variables $X_1$, \dots,
$X_d$ with coefficients in $\mathbb{R}$ by $\mathbb{R}[X_1, \dots,
X_d]$.
For a set of $s$ polynomials $\mathcal{P} = \{P_1, \dots, P_s\} \subset \mathbb{R}[X_1, \dots, X_d]$, a sign condition is specified by a sign vector $\sigma \in \{-1, 0, +1\}^s$, and the sign condition is non-empty over $V$ with respect to $\mathcal{P}$ if there is a point $x \in V$ such that
$\sigma = (\sign(P_1(x)), \dots, \sign(P_s(x))),$
where $\sign(\alpha)$ is the sign function defined as
$$\sign(\alpha) = \begin{cases}
	1, \text{ if } \alpha > 0,\\
	0, \text{ if } \alpha = 0,\\
	-1, \text{ if } \alpha < 0.
\end{cases}$$

\noindent The \emph{realization space} of $\sigma \in \{-1, 0, +1\}^s$ over $V$ is the set
$$R(\sigma) = \{x | x \in V, \sigma = (\sign(P_1(x)), \dots, \sign(P_s(x)))\}.$$
If $R(\sigma)$ is not empty then each of its non-empty semi-algebrically connected (which is equivalent to just connected on semi-algebraic sets~\cite[Theorem 5.22]{BasuPR06}) components is a \emph{cell} of $\mathcal{P}$ over $V$.

The following result, sometimes called Warren's Theorem~\cite{Waren68}
, gives an algorithm to compute a point in each cell of $\mathcal{P}$ over $V$. 

\begin{proposition}[Basu, Pollack and Roy~(\citeyear{BasuPR06}, Theorem 13.22), see also Simonov et al.~(\citeyear{SimonovFGP19})]
  Let
  $\mathcal{P} \subset \mathbb{R}[X_1, \dots, X_d]$ be a finite set of
  $s$ linear polynomials, i.e., each $P \in \mathcal{P}$ is of degree at
  most $1$.
  Let $D$ be a
  ring generated by the polynomials in
  $\mathcal{P}$.
  There is an algorithm using at most $\bigoh(s^d)$ algorithmic operations in $D$,
  which given $\mathcal{P}$ computes a set of points meeting every
  non-empty cell of $V$ over $\mathcal{P}$. 
  \label{prop:sample_points}
\end{proposition}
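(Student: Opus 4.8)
The plan is to exploit the fact that every $P \in \mathcal{P}$ is linear, so each $P_i$ defines an affine hyperplane $H_i = \{x \in \mathbb{R}^d : P_i(x) = 0\}$. For a sign vector $\sigma \in \{-1,0,+1\}^s$ the realization space $R(\sigma)$ is exactly the intersection of the corresponding open half-spaces and hyperplanes, hence convex and therefore connected. Consequently every non-empty $R(\sigma)$ is already a single cell, and the cells of $\mathcal{P}$ over $V$ are precisely the relatively open faces of the hyperplane arrangement $\mathcal{A} = \{H_1,\dots,H_s\}$. So it suffices to output one point in every non-empty face of $\mathcal{A}$. First I would bound the number of such faces by $\bigoh(s^d)$: a face with zero-set $Z = \{i : \sigma_i = 0\}$ lies in the affine subspace $L_Z = \bigcap_{i \in Z} H_i$, and after the general-position reduction below it is a full-dimensional region of the arrangement that the remaining hyperplanes induce on $L_Z$, where $\dim L_Z = d - |Z|$ and $|Z| \leq d$. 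Since an $m$-dimensional arrangement of at most $s$ hyperplanes has at most $\sum_{i=0}^{m}\binom{s}{i} = \bigoh(s^m)$ full-dimensional regions (by the standard deletion recurrence on the number of hyperplanes), summing over $Z$ gives $\sum_{k=0}^{d}\binom{s}{k}\,\bigoh(s^{d-k}) = \bigoh(s^d)$; this also bounds the size of the output.

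To produce the points I would first reduce to an arrangement in \emph{general position}, meaning every $d$ hyperplanes meet in a unique vertex and no $d+1$ share a point. This is achieved by a standard symbolic perturbation of the constant and linear coefficients of the $P_i$ using a single infinitesimal $\epsilon$; such a perturbation only refines $\mathcal{A}$, so each original face still receives a representative once $\epsilon$ is specialized to a sufficiently small rational, and this specialization stays within $D$. In general position every non-empty face is incident to a vertex, so I would enumerate the $\binom{s}{d} = \bigoh(s^d)$ candidate vertices, each obtained by solving a $d \times d$ linear system in $\bigoh(1)$ ring operations for fixed $d$. Around each vertex $p$ lying on hyperplanes $H_{i_1},\dots,H_{i_d}$ I would then emit the $\bigoh(3^d) = \bigoh(1)$ points $p + \delta w$, where $w$ ranges over representatives of the local sign patterns of those $d$ hyperplanes and $\delta > 0$ is chosen smaller than the distance from $p$ to every non-incident hyperplane. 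These points realize exactly the faces incident to $p$, and as every face touches some vertex, their union over all vertices meets every cell; the total is $\bigoh(s^d)$ operations in $D$, matching the claim.

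Two loose ends then need care. \emph{Unbounded} faces incident to no vertex arise when the hyperplane normals do not span $\mathbb{R}^d$; I would handle these by adding a generic bounding simplex whose facets make the arrangement pointed, computing representatives in the enlarged arrangement, and mapping each one back by dropping the added constraints, noting that a representative interior to the box still certifies the original sign vector. Separately, the perturbation argument must guarantee that the specialized rationals $\epsilon$ and $\delta$ keep every emitted point strictly on the correct side of each non-incident hyperplane; this follows by taking them below the minimum nonzero absolute value attained by the finitely many linear forms at the finitely many vertices, a quantity computable in $D$.

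The hard part will be exactly this bookkeeping: certifying that the finitely many symbolic choices (the infinitesimal $\epsilon$, the offset $\delta$, and the bounding simplex) can be \emph{simultaneously} specialized to explicit elements of $D$ without collapsing any face or inflating the operation count past $\bigoh(s^d)$. The underlying geometry---the $\bigoh(s^d)$ cell count and the incidence of every face to a vertex---is classical, but turning it into an exact, effective procedure over the ring $D$ is where the genuine work lies, and is the reason we prefer to invoke the cited formulation of \citeauthor{BasuPR06} rather than reprove it in full.
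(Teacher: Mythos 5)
The paper does not actually prove this proposition: it is imported verbatim as the special case of Basu--Pollack--Roy (Theorem 13.22) for $V=\mathbb{R}^d$ and degree-$1$ polynomials, so there is no in-paper argument to compare against. Your sketch instead tries to reprove it via the classical hyperplane-arrangement route (convexity of each realization space, the $\bigoh(s^d)$ face count, vertex-based enumeration after perturbation to general position). That outline is the right classical picture, and the face-count and convexity steps are fine.

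There is, however, a genuine gap in the step you yourself flag as the hard part, and it is not mere bookkeeping: perturbing to general position and then \emph{specializing $\epsilon$ to a small rational} loses exactly the cells with zero entries in their sign vector. A sample point for a sign condition with $\sigma_i=0$ must lie \emph{exactly on} the original hyperplane $H_i$; after perturbation, the vertices and nearby points you emit lie on the \emph{perturbed} hyperplanes, which for any concrete nonzero $\epsilon$ are disjoint from the original ones. Concretely, for three distinct lines through the origin in $\mathbb{R}^2$, the face with sign vector $(0,0,0)$ is the origin; your perturbed arrangement replaces it by a small triangle whose vertices, for any real $\epsilon>0$, lie on none of the original lines, so that cell receives no representative. ``The perturbation only refines $\mathcal{A}$'' is true over the field of infinitesimals (which is how Basu--Pollack--Roy proceed, taking $\lim_\epsilon$ of points computed over real Puiseux series), but it fails after substituting a rational. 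The clean elementary fix is to drop the perturbation for the zero-coordinates entirely: enumerate all flats $L_Z=\bigcap_{i\in Z}H_i$ for $|Z|\le d$, and recursively compute one point in each full-dimensional region of the arrangement induced on $L_Z$ by the remaining hyperplanes; this stays within $\bigoh(s^d)$ operations for fixed $d$ and needs no infinitesimals. As written, though, your algorithm would output an incomplete set of sample points, so the proposal does not yet constitute a correct proof of the statement it is meant to replace.
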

We remark Proposition~\ref{prop:sample_points} is merely the special
case of Basu, Pollack and Roy~(\citeyear{BasuPR06}, Theorem 13.22)
restricted to $V=\mathbb{R}^d$ and linear polynomials.
\fi

\section{Instances with Few Agents}

In this section we consider the complexity of \problemshort{} and
\problemvdshort{} for instances with only a few agents. Interestingly,
we will show that while both problems are \NP-hard even for instances
with only two agents, \problemshort{} turns out to be significantly
harder when additional restrictions are considered for the input
graph. In particular, while \problemshort{} is already \NP-hard on
stars, \problemvdshort{} can be solved in polynomial-time (for a fixed
number of agents) even on trees and only becomes \NP-hard on graphs
that have a vertex deletion set into a matching. Moreover, the problem becomes much harder if we relax the graph structure from trees to ``tree-like graphs'': both problems become \NP-hard on graphs that have treewidth $2$ and maximum degree $3$ (see Figure~\ref{fig:twNPh} for an illustration). 

\ifshort
All three \NP-hardness results follow from polynomial-time reductions from the \NP-complete \NPP{} problem: given a
  multi-set $S=\{s_1,\dots,s_n\}$ of non-negative integers, decide whether there is a partition of $S$ into two subsets $S_1$
  and $S_2$ such that $\sum_{s\in S_1}s=\sum_{s\in S_2}s$.
\fi

\iflong\begin{theorem}\fi
\ifshort\begin{theorem}[$\star$]\fi
\label{the:NP-ps}
  \problemshort{} is \NP-hard even when restricted to instances with
  two agents where the graph is a star.
\end{theorem}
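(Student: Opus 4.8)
The plan is to reduce from \NPP{}, which the preliminaries have flagged as the source of all three hardness results. Given a multiset $S = \{s_1, \dotsc, s_n\}$ of non-negative integers with total sum $\Sigma = \sum_i s_i$, I would build a star $G$ whose center $c$ is joined to $n$ leaves $\ell_1, \dotsc, \ell_n$, so that edge $e_i = c\ell_i$ represents element $s_i$. Only two agents are present, and I would give them \emph{identical} utility functions defined by $\val_1(e_i) = \val_2(e_i) = s_i / \Sigma$ (the normalization forces $\val_a(G) = 1$). The intuition is that in a star every connected piece must contain the center $c$ of $G$ if it spans more than one edge, so the two agents' pieces can together touch the center through at most one ``crossing,'' which severely constrains how the edges can be split; combined with identical valuations, envy-freeness will force each agent to receive exactly value $1/2$, and this is exactly a balanced partition of $S$.

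The key steps, in order, are as follows. First I would argue the structural lemma that in any partition of a star into two connected pieces, at most one edge can be genuinely ``shared'' (split at an interior point) between the two pieces, because a connected piece using two distinct leaf edges must pass through the center, and both pieces cannot own the center-incident portions of two different edges simultaneously while staying connected and disjoint. This is where the adjacency definition around the fixed vertex ordering (pieces meeting at $c$ must contain the $c$-endpoint of their interval) does the real work. Second, using identical valuations I would show envy-freeness is equivalent to both agents receiving value exactly $1/2$: if agent $1$ strictly prefers its own piece it has more than $1/2$, leaving agent $2$ with less than $1/2$ and hence envious, so the only envy-free split is the equal-value split. Third, I would translate an equal-value split into a partition of $S$: the at-most-one shared edge can be split into two integer-proportioned halves only if the contribution of that single edge plus whole edges on each side sums to $\Sigma/2$ on each side, and conversely any balanced partition $S_1 \uplus S_2$ yields an envy-free assignment by giving each agent the star restricted to one part. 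Both directions of this equivalence need to be spelled out, including how a fractional cut of the one shared edge is handled.

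The main obstacle I anticipate is the structural argument about connectivity on a star, specifically ruling out the possibility that the two agents fractionally share \emph{several} edges and thereby achieve an equal split that does \emph{not} correspond to an integer partition of $S$. On a star the center is the unique cut vertex, so I must carefully use the definition of ``piece'' and ``adjacency'' to show that at most one edge can be cut at an interior point while keeping both resulting pieces connected; otherwise the reduction would not be faithful. A clean way to handle this is to observe that each connected piece meeting more than one edge must contain the center-side sub-interval of each of its edges, so two disjoint connected pieces can share the center region through only a single crossing edge, and any additional cut edge would have to be entirely assigned to one side. Once this ``single fractional edge'' property is established, the arithmetic forcing a balanced integer partition is routine, and I would simply note that the whole construction is polynomial in the encoding of $S$, completing the reduction and hence the \NP-hardness.
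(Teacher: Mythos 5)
Your construction is exactly the paper's (reduce from \NPP{}, a star with one leaf per element, two agents with identical valuations proportional to the $s_i$), and your second step --- identical valuations force each agent to receive exactly half the total --- is correct. The genuine gap is in the step you call ``routine.'' Your structural lemma, that at most one edge of the star can be cut at an interior point, is true but too weak to close the backward direction: if one edge $e_j$ is cut at fraction $t\in(0,1)$, the equation $\sum_{i\in S_1}s_i+t\,s_j=\Sigma/2$ does \emph{not} force $t\in\{0,1\}$, so an equal-value connected split need not induce a balanced integer partition of $S$. Concretely, for $S=\{3,1\}$ (a no-instance of \NPP{}) the division in which one agent takes the leaf-side two thirds of the weight-$3$ edge and the other takes the remaining third together with the whole weight-$1$ edge is connected and gives both agents value $2=\Sigma/2$; your argument as written does not exclude it. Note also that the configuration you implicitly entertain --- the single shared edge contributing a fraction to \emph{each} side while both sides also hold whole edges --- is impossible on a star: only one of the two sub-intervals of the cut edge contains the center endpoint, so by the adjacency definition the agent holding the leaf-side sub-interval can hold nothing else.

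That last observation is precisely the paper's key step and is the idea missing from your plan: if any edge is cut at an interior point, then one agent's entire piece is a strict sub-piece of a single edge and hence has value $t\,s_j<s_j$. To turn this into a contradiction with the equal-value requirement you additionally need $s_j\le\Sigma/2$, which you may assume without loss of generality for \NPP{} (an instance with an element strictly exceeding half the total is trivially a no-instance, and one with an element equal to half the total is trivially a yes-instance). With that normalization, $t\,s_j<\Sigma/2$, so the confined agent envies the other; therefore every edge must be integrally assigned in any envy-free division, and the equivalence with \NPP{} follows. Please add both the confinement observation and the bound on $\max_i s_i$ --- without them the reduction is not sound, and the arithmetic you defer is exactly where it breaks.
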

\iflong
\begin{proof}
  We provide a simple polynomial-time reduction from the well-known
  \NP-complete \NPP{} problem. Let $S$
  be an instance of \NPP{}, i.e., $S$ is a
  multi-set of non-negative integers and the aim is to
  decide whether there is a partition of $S$ into two subsets $S_1$
  and $S_2$ such that $\sum_{s\in S_1}s=\sum_{s\in S_2}s$. Given $S$,
  we construct an instance $(A,G,(u_a)_{a\in A})$ of \problemshort{}
  by setting $A=\{a_1,a_2\}$, $G$ is the star with center $c$ and with
  one leaf $l_s$ for
  every $s \in S$, and $u_{a_1}((c,l_s))=u_{a_2}((c,l_s))=s$ for every
  $s \in S$. Observe that w.l.o.g. we can assume that every edge is
  fully assigned to one of the agents, since otherwise one of the
  agents would be assigned less than one edge, which would not result
  in an envy-free assignment. Therefore, every envy-free assignment
  results in a $2$-partition of the edges both having the same total
  valuation, which shows that $S$ is a
  yes-instance of \NPP{} if and only if $(A,G,(u_a)_{a\in A})$ is a
  yes-instance of \problemshort{}.
\end{proof}
\fi

\ifshort
\begin{theorem}[$\star$]
\fi
\iflong
\begin{theorem}
\fi
\label{the:matchNPhard}
  \problemvdshort{} is \NP-hard even when restricted to instances with
  two agents where the graph consists of a matching plus two additional
  vertices.
\end{theorem}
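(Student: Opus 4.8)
The plan is to reduce from the \NP-complete \NPP{} problem. Given a multiset $S=\{s_1,\dots,s_n\}$, I would first dispose of the degenerate cases in polynomial time: if $\sum_i s_i$ is odd there is no balanced partition, and if some $s_j \geq \tfrac12\sum_i s_i$ the instance is a yes-instance exactly when $s_j = \tfrac12\sum_i s_i$. Hence I may assume that $2T := \sum_i s_i$ is even and that $\max_i s_i < T$, and \NPP{} remains \NP-hard under these assumptions. The intended graph $G$ has two hub vertices $x,y$ and, for each item $i$, a vertex $w_i$ adjacent to both $x$ and $y$ together with a pendant vertex $w_i'$ whose only neighbour is $w_i$. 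Deleting $\{x,y\}$ leaves precisely the edges $w_iw_i'$, so $G$ is a matching plus two additional vertices as required, and $G$ is connected through the hubs. I use two agents with identical utilities that place all weight on the matching edges, $u_{a_1}(w_iw_i')=u_{a_2}(w_iw_i')=s_i/(2T)$, while every hub edge $xw_i, yw_i$ gets weight $0$; these utilities are normalized.

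Since the agents are identical and total utility is conserved by any partition, an assignment into pieces $P_1,P_2$ is envy-free iff $\val_{a_1}(P_1)=\val_{a_2}(P_2)=\tfrac12$. The heart of the argument is to show that in any such solution no weight-bearing edge $w_iw_i'$ is ever split, and this is exactly where the degree-one pendant $w_i'$ does the work: if $w_iw_i'$ were split with $w_i$ and $w_i'$ in different pieces, then the sub-piece incident to $w_i'$ has no other incident edge, so it is an isolated edge-piece and the whole piece containing it must consist of that single fragment, of value at most $s_i/(2T) < \tfrac12$, contradicting envy-freeness. Thus every $w_iw_i'$ lies wholly in one piece, each $w_i'$ shares the piece of $w_i$, and branch $i$ contributes its weight $s_i/(2T)$ integrally to one side.

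For the converse direction I would use the vertex-disjointness constraint to argue that the hubs lie in different pieces: if $x$ and $y$ shared a piece, the other piece could reach at most one $w_i$ (any two $w_j,w_k$ are separated by the hubs), giving it value at most $\max_i s_i/(2T) < \tfrac12$ and violating envy-freeness. Consequently the branches are distributed between the $x$-piece and the $y$-piece, and envy-freeness forces total weight $T$ on each side, i.e.\ a balanced partition of $S$. In the ``if'' direction, given a balanced partition $S=S_1\sqcup S_2$, I attach branch $i$ wholly to $x$ or to $y$ according to whether $i\in S_1$ or $i\in S_2$; because each $w_i$ is adjacent to both hubs this keeps both pieces connected, and the weight-$0$ hub edges are split and routed so that the partition is vertex-disjoint, yielding two pieces of value exactly $\tfrac12$.

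The only genuine obstacle is ruling out fractional solutions: the continuous relaxation of \NPP{} (split the total weight evenly) is always feasible, so the reduction must rely on the graph to forbid splitting the weight-bearing edges. This is precisely what the pendant construction together with the preprocessing assumption $\max_i s_i < T$ achieves, since any split of a matching edge, as well as any placement of both hubs in a single piece, is shown to produce a piece of value strictly below $\tfrac12$. I expect the remaining work — the vertex-disjointness bookkeeping for the weight-$0$ hub edges in the ``if'' direction — to be routine but slightly tedious.
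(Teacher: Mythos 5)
Your proposal is correct and follows essentially the same route as the paper: a reduction from \NPP{} using the identical gadget (two hub vertices joined to every $w_i$, with all utility placed on pendant matching edges $w_iw_i'$), so that envy-freeness between two identical agents forces each unsplittable pendant edge wholly to one side and yields a balanced partition. Your extra preprocessing ($\max_i s_i < T$) and the hub-separation argument only spell out details that the paper's proof leaves implicit.
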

\iflong
\begin{proof}
  The reduction is similar to the one employed in
  Theorem~\ref{the:NP-ps} and uses a polynomial-time
  reduction from the \NPP{} problem.
  Let $S$
  be an instance of \NPP{}\iflong, i.e., $S$ is a
  multi-set of non-negative integers and the aim is to
  decide whether there is a partition of $S$ into two subsets $S_1$
  and $S_2$ such that $\sum_{s\in S_1}s=\sum_{s\in S_2}s$\fi. Given $S$,
  we construct an instance $(A,G,(u_a)_{a\in A})$ of \problemvdshort{}
  as follows. We set $A=\{a_1,a_2\}$ and the graph $G$ has two
  vertices $c_1$ and $c_2$ as well as two vertices $l_s$ and $l_s'$
  for every $s \in S$. Moreover, $G$ has all edges between
  $\{c_1,c_2\}$ and $\{l_s\mid s \in S\}$ and an edge between $l_s$
  and $l_s'$ for every $s \in S$. Note that $G-\{c_1,c_2\}$ is a
  matching. Finally, for every $i \in \{1,2\}$, we set $u_{a_i}(e)=s$ if $e=(l_s,l_s')$ for some $s \in
  S$ and $u_{a_i}(e)=0$, otherwise. Note that w.l.o.g. we can assume
  that every edge $(l_s,l_s')$ is fully assigned to exactly one of the
  agents since otherwise one agent would only be assigned a part of
  $(l_s,l_s')$ (and nothing else). Moreover, by assigning $c_i$ to
  agent $a_i$, we can obtain an assignment that assigns any subset of
  the edges $(l_s,l_s')$ to agent $a_i$ and all remaining such edges
  to agent $a_{2-i+1}$. Since every such assignment is envy-free if
  and only if the total valuations of the edges assigned to both
  agents is the same, we obtain that $S$ is a
  yes-instance of \NPP{} if and only if $(A,G,(u_a)_{a\in A})$ is a
  yes-instance of \problemvdshort{}.
\end{proof}
\fi

\ifshort
\begin{theorem}[$\star$]
\fi
\iflong
\begin{theorem}
\fi
\label{thm:twNPh}
  \problemvdshort{} and \problemshort{} are \NP-hard even when restricted to instances with
  two agents \iflong with identical valuations \fi where the graph has treewidth 2 and maximum degree 3.
\end{theorem}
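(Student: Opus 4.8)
The plan is to reduce from \NPP{}, mirroring the strategy of Theorems~\ref{the:NP-ps} and~\ref{the:matchNPhard} but now on a graph that is genuinely of treewidth $2$, so that a single construction yields hardness for both variants at once. Given a multi-set with total value $T$, I may assume after a standard preprocessing (e.g.\ replacing the instance by $S\cup\{T,T\}$, which preserves the answer while rescaling the target and guaranteeing no element can lie on the same side as both copies) that every element is strictly smaller than half of the total; write $S=\{s_1,\dots,s_n\}$ and $T=\sum_i s_i$ for the resulting instance, so that $s_i<T/2$ for all $i$. I then build a ``subdivided ladder with pendants'': two rail paths $p_1-\dots-p_n$ and $q_1-\dots-q_n$ (all edges of value $0$), a rung-midpoint $m_i$ adjacent to both $p_i$ and $q_i$ for each $i$ (again value $0$), and a pendant leaf $\ell_i$ attached to $m_i$ by an edge of value $s_i$. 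Both agents receive this same utility $\val$, so the valuations are identical. One checks immediately that the maximum degree is $3$ (attained at the interior rail vertices and at each $m_i$) and that the graph is a grid-like series-parallel graph of treewidth exactly $2$.

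Since the two agents have identical utilities, envy-freeness is equivalent to both pieces having value exactly $T/2$; it therefore suffices to decide whether $G$ admits a partition into two connected pieces of equal value. For the forward direction, given a balanced subset $S_1\subseteq S$ I put the whole top rail into $P_1$ and the whole bottom rail into $P_2$, route each pendant $\ell_i$ with $i\in S_1$ into $P_1$ (through $p_im_i$) and each pendant with $i\notin S_1$ into $P_2$ (through $q_im_i$). Each piece is connected along its rail and carries value $\sum_{i\in S_1}s_i=T/2$, respectively $\sum_{i\notin S_1}s_i=T/2$. In the \problemshort{} setting this already works; for \problemvdshort{} I additionally put every $p_i$ in $P_1$, every $q_i$ in $P_2$, and each pair $m_i,\ell_i$ in the piece chosen for item $i$. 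Then the only edges whose endpoints fall in different pieces are the $0$-value connectors joining $m_i$ to the rail \emph{not} chosen for $i$, and these may simply be split at an interior point; this keeps the partition vertex-disjoint while changing no values and breaking no connectivity. Crucially, because each $m_i$ is adjacent to both rails, \emph{any} subset $S_1$ can be realised, so unlike a path or cycle there is no positional/contiguity restriction.

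For the converse I argue that every equal-value solution is essentially combinatorial. The leaf $\ell_i$ has degree $1$, so if the pendant edge $m_i\ell_i$ were split between the two pieces, its $\ell_i$-end portion would be adjacent to nothing and would have to constitute an entire piece on its own; as all value lives on the pendants and every $s_i<T/2$, such a piece would have value strictly below $T/2$, contradicting the equal split. Hence in any solution each value-bearing edge lies wholly in one piece, so the value of $P_1$ is exactly the subset sum $\sum_{i:\,\ell_i\in P_1}s_i$, and equality with $P_2$ forces this subset to sum to $T/2$. This establishes the equivalence for both variants. I expect the main obstacle to be the \problemvdshort{} direction: one must simultaneously guarantee that both pieces remain connected for an arbitrary choice of subset (which the double rail provides) and that sending a pendant to the far rail never drags its value onto the wrong side under vertex-disjointness---this is precisely why the $0$-value connector edges, rather than a bare leaf hanging off a shared hub as in the star of Theorem~\ref{the:NP-ps}, are needed.
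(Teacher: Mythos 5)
Your construction is exactly the gadget the paper uses for Theorem~\ref{thm:twNPh}: a ladder whose two zero-value rails are joined by degree-3 rung midpoints, each holding a pendant edge of value $s_i$, with identical valuations for the two agents and the same forward/backward correspondence between balanced partitions of $S$ and equal-value connected (respectively vertex-disjoint) pieces. The only place you go beyond the paper is the preprocessing guaranteeing $s_i<T/2$, which rigorously rules out an envy-free solution that splits a pendant edge --- a case the paper dismisses with a terse ``w.l.o.g.'' --- so your argument is, if anything, slightly more careful on that point.
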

\ifshort
\begin{proof}[Proof Sketch]
Given an instance of \NPP{} with $n$ positive integers, we create a graph with $4n$ vertices as follows (see also Fig.~\ref{fig:twNPh}). For every $i \in [n]$, we construct the vertices $a_i, b_i, c_i$, and $d_i$ and the edges $(a_i,c_i), (c_i,b_i)$ and $(c_i,d_i)$. In addition, for every $i \in [n-1]$ we create the edges $(a_i, a_{i+1})$ and $(b_i,b_{i+1})$. We create two agents with identical valuations: for every $i \in [n]$, both agents value the edge $(c_i,d_i)$ with $s_i$, while every other edge has value 0.
\end{proof}
\fi

\iflong
\begin{proof}
Again, we will reduce from \NPP{}. This time, the graph we will construct has maximum degree 3 and it is not hard to verify that has treewidth 2. We will use the same construction to prove both results, this is depicted at Figure~\ref{fig:twNPh}. More formally, given an instance of \NPP{} with $n$ positive integers, we create a graph with $4n$ vertices as follows. For every $i \in [n]$, we construct the vertices $a_i, b_i, c_i$, and $d_i$ and the edges $(a_i,c_i), (c_i,b_i)$ and $(c_i,d_i)$. In addition, for every $i \in [n-1]$ we create the edges $(a_i, a_{i+1})$ and $(b_i,b_{i+1})$. To complete the construction, we need to specify the valuations of the agents. We create two agents with identical valuations. For every $i \in [n]$, both agents value the edge $(c_i,d_i)$ with $s_i$. Every other edge has value 0.

To prove the theorem we observe the following. Firstly, note that w.l.o.g. we can assume that every edge $(c_i,d_i)$ is fully assigned to exactly one of the agents since otherwise one agent would only be assigned a part of $(c_i,d_i)$ (and nothing else). In addition, note that for \problemshort{} there is no allocation where a vertex is shared by the two agents. To complete our proof, we observe that every allocation corresponds to a partition of the $n$ integers, we can conclude that there exists an envy-free allocation if and only if there is a solution for the \NPP{} instance.
\end{proof}
\fi
\begin{figure}
\vfil
\includegraphics[page=2,scale=1.15]{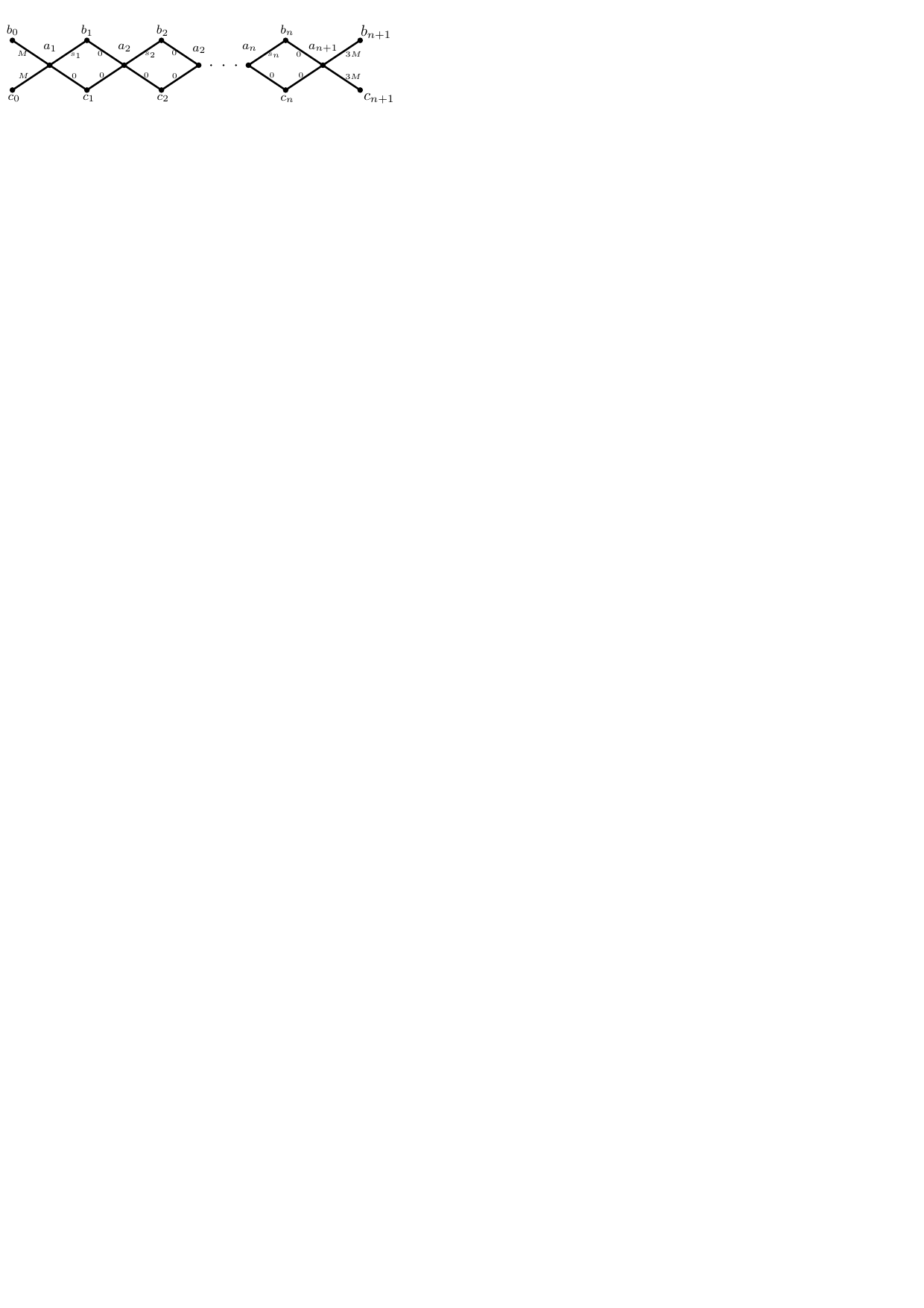}
\caption{The graph used in the proof of Theorem~\ref{thm:twNPh}.\label{fig:twNPh}}
\end{figure}

Next, we move on to the aforementioned algorithmic result for \problemvdshort. To establish that, we first prove the following technical lemma.

\iflong
\begin{lemma}
\fi
\ifshort
\begin{lemma}[$\star$]
\fi
\label{lem:reduction_to_ILP}
	 Let \(\mathcal{I}=(A,G,\{u_a\}_{a\in A})\) be an instance of \problemshort\ (\problemvdshort) and let $G$ be a tree, or a cycle, and let $F$ be a set of edges of $G$. There is an algorithm that runs in time polynomial in $|A|^{|F|+1}\cdot |F|^{|A|} \cdot |\mathcal{I}|$, \iflong where \(|\mathcal{I}|\) is the encoding length of \(\mathcal{I}\),\fi and either outputs an assignment \(\pi\) that is envy-free such that each connected component of $G-F$ is assigned to exactly one agent (and, for \problemvdshort, $\pi(A)$ is additionally a vertex-disjoint partition)  
	 or correctly identifies that such an assignment does not exist. 
\end{lemma}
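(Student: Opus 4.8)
The plan is to exploit the fact that deleting the $|F|$ edges of $F$ from the tree or cycle $G$ leaves at most $|F|+1$ connected components, and that the lemma only asks for assignments in which each such component is handed to a single agent. I would first compute the components $C_1,\dots,C_k$ of $G-F$ (so $k \le |F|+1$) and contract each of them to a single node, obtaining a multigraph $H$ whose nodes are the components and whose edges are exactly the edges of $F$; since $G$ is a tree or a cycle, $H$ is again a tree or a cycle. Because every piece must be connected and the only way for two components assigned to the same agent to be joined is through $F$-edges lying entirely between them, the set of components given to any fixed agent must induce a connected subgraph of $H$. The first step of the algorithm is therefore to branch over all assignments of the $k \le |F|+1$ components to the $|A|$ agents, discarding those in which some agent's components fail to be connected in $H$; this accounts for the $|A|^{|F|+1}$ factor.

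The second step handles the edges of $F$ and the agents that receive no full component. I would argue the following structural fact: in any solution of the required form, each edge $e\in F$ is cut into consecutive sub-intervals whose owners, read along $e$, consist of the owner of the component at one endpoint of $e$, followed by a block of ``interior'' owners, followed by the owner of the component at the other endpoint; moreover every agent holding no full component must receive a \emph{single} interval of a \emph{single} edge of $F$ (again forced by connectivity, since any interior interval is adjacent to nothing). Consequently, after fixing the component assignment, the remaining combinatorial freedom is exactly a choice, for each component-less agent, of the edge of $F$ on which its interval lies (for \problemshort\ also a choice among the $\bigoh(1)$ positions left/interior/right, which I absorb into constants). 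Branching over these choices contributes the $|F|^{|A|}$ factor. For \problemvdshort\ I would additionally require the interval of $e$ incident to each endpoint vertex to be owned by that vertex's component owner, which is automatic from the ordering above and guarantees vertex-disjointness (interior intervals touch no vertex).

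Once both guesses are fixed, the entire combinatorial type of the partition is determined up to the \emph{lengths} of the intervals: every agent's piece is a fixed union of components and whole internal $F$-edges together with variable-length portions of a bounded number of $F$-edges. I would then set up a linear program whose variables are these interval lengths (equivalently the cut points), with constraints enforcing (i) non-negativity and that the lengths on each $e\in F$ sum to $1$ (so that $\pi(A)$ is a genuine partition), and (ii) envy-freeness $\val_a(\pi(a)) \ge \val_a(\pi(a'))$ for all $a,a'\in A$. Since $\val_a(P)=\sum_{(e,I)\in P}|I|\cdot \val_a(e)$ is \emph{linear} in the interval lengths, every such constraint is linear, so the program is an LP solvable in time polynomial in $|\mathcal{I}|$; any feasible point yields an envy-free assignment of the desired form, and infeasibility across all branches certifies that none exists.

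Multiplying the two branching factors by the per-branch LP cost gives the stated running time polynomial in $|A|^{|F|+1}\cdot |F|^{|A|}\cdot|\mathcal{I}|$. The main obstacle I expect is establishing the structural characterization rigorously---in particular, showing that connectivity forces each component-less agent onto a single $F$-edge and that no solution is lost by restricting to the canonical ordering of owners along each $F$-edge, while correctly handling the difference between \problemshort\ and \problemvdshort\ (where endpoint intervals must follow the component owner) and the measure-zero boundary conventions for the shared cut points, which I would resolve using half-open intervals exactly as in the partition definition.
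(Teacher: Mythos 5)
Your proposal follows essentially the same route as the paper's proof: branch over the at most $|A|^{|F|+1}$ assignments of components of $G-F$ to agents, use connectivity to force each component-less agent into a single interval of a single $F$-edge and branch over the $|F|^{|A|}$ choices of that edge, then solve a linear program over the interval lengths with partition and envy-freeness constraints. The only detail worth noting is that on a cycle an agent owning several components may have two distinct minimal sets of connecting $F$-edges (the paper branches on both, since those edges must then be fully assigned and excluded from hosting interior agents), but this is a constant-factor refinement of your scheme rather than a different idea.
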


\ifshort
\begin{proof}[Proof Sketch]
Observe that $G - F$ consists of most $|F|+1$ connected components and that there are at most  $|A|^{|F|+1}$ many assignments of these connected components to agents. By careful branching, we can determine for each agent whether they are (1) assigned to a specific single edge of $F$ and their piece is fully inside some $e \in F$, or (2) are assigned all edges in a subgraph $T$ of $G$ induced by a union of some connected component of $G-F$ and edges of $F$. For each branch, we design an instance of Linear Programming (LP) and solve it; if the instance has a solution, we can translate it into an assignment $\pi$ with the desired properties, and otherwise no such assignment exists for the current branch.
\end{proof}
\fi

\iflong
\begin{proof}
Observe that $G - F$ consists of most $|F|+1$ connected components.
In addition, observe that there are at most  $|A|^{|F|+1}$ many assignments of these connected components to agents. Observe that there might exist agents that have been assigned more than one component and other agents that might {\em not} have been assigned any component. Consider now one of these assignments. For each agent $a$ that is assigned more than one component, if we can, we make their component connected by assigning a
minimal set of edges from $F$ that makes $\pi(a)$ connected. Note, that if \(G\) is a tree this choice is unique, if it exists.
If \(G\) is a cycle we branch on up to two such choices if they exist.
If we cannot make the piece of every agent connected, then we discard this assignment and we proceed to the next one.
Let $F' \subseteq F$ be the set of edges not yet assigned to any agent, after we convert the components every agent gets into a single one. Each agent that is not assigned any piece yet, will get a piece {\em fully inside} some edge of $F'$. For each such agent we branch on the edge they will be contained in. This results in at most $|F'|^{|A|} \leq |F|^{|A|}$ branches. 

At this point there are two possibilities for an agent.
\begin{itemize}
\item They are assigned to a single edge of $F$ and its piece is fully inside some edge $e \in F$.
\item They get all edges in a subgraph $T$ of $G$ induced by a union of some connected component of $G-F$ and edges of $F$. In this case the agent is also allowed to get some pieces of edges incident with some vertex in $T$ (possibly not the full edge), however we do not branch on such edges for this agent.
\end{itemize}
 Note that if \(\mathcal{I}\) is an instance of \problemvdshort, then we can check that the trees $T_1$ and $T_2$ assigned to two different agents in our branching are vertex disjoint and terminate the branch if it is not the case. 

For each branch we now design an instance of Linear Programming (LP) that has a solution if and only if there exists an envy-free assignment consistent with our branching (which can be viewed as a ``guess'' of the properties of an optimal solution). 
For each edge $e\in F'$ and each agent $a\in A$ such that $a$ is allowed to have a piece inside the edge $e$ according to our guess we have a variable $x_e^a$ which represents the length of the piece of the edge $e$ that the agent $a$ gets. We add the following constraints into the LP:
\begin{align}
	&x_e^a\ge 0&  &\text{for all }a\in A, e\in F'\label{eq:No_agents_positive}\\
	&\sum_{a\text{ is allowed a piece of e}}x_e^a = 1& &e\in F'. \label{eq:No_agents_sum_edge}
\end{align}
Let $\mathbf{x}$ denote the vector of all variables in some fixed order. Given $\mathbf{x}$, we can compute the value of the piece assigned to an agent $a$. If agent $a$ is assigned a subtree $T$ and $F_T$ denotes the subset of edges in $T$ that are incident with some vertex in $T$, then for every agent $a'\in A$ we set  \[\val_{a'}(a, \mathbf{x})= \val_{a'}(T)+\sum_{e\in F_T}\val_{a'}(e)\cdot x_e^a.\] For an agent $a$ that is fully assigned a piece inside an edge $e\in F$ and every agent $a'$, we set \[\val_{a'}(a, \mathbf{x})= \val_{a'}(e)\cdot x_e^a.\]
Now for every ordered pair $(a,a')$ of agents $a,a'\in A$, we add the following constraint to the LP:
\begin{equation}
	\val_{a}(a, \mathbf{x})\ge \val_{a}(a', \mathbf{x}).\label{eq:No_agents_envy}
\end{equation} 
It is rather straightforward to see that if Constraints~(\ref{eq:No_agents_positive}) and (\ref{eq:No_agents_sum_edge}) are satisfied, then we can assign to each agent $a$ a connected piece of the graph such that the length of the piece the agent $a$ gets inside the edge $e$ is $x^a_e$ and if we guess the subgraph $T$ for the agent $a$, then $a$ is assigned all edges inside $T$ fully. Moreover, it is easy to see that in this case the utility of the agent $a$ is $\val_a(a,\mathbf{x})$. Hence, Constraint~(\ref{eq:No_agents_envy}) implies that such assignment is envy-free. Moreover, assuming an envy-free assignment \(\tempAsgn\) exists, one of our branches has to be consistent with \(\tempAsgn\) and in that case the algorithm will output a vertex-disjoint envy-free assignment. We conclude that if the algorithm does not find a solution in any of the branches then the input is a \texttt{No}-instance.

Overall this procedure requires time in \(\mathcal{O}(|A|^{|F| + 1} \cdot |F|^{|A|} \cdot T(|F||A| + |A|))\), where \(T(x)\) denotes the time required to solve a linear program with \(x\) variables.
The most efficient currently known procedure requires \(\mathcal{O}(x^\omega L)\)~\cite{Alman2021} where \(\omega\) is the matrix multiplication constant and \(L\) is the number of input bits, which in our case is upper-bounded by \(|\mathcal{I}|\).
\end{proof}
\fi

Lemma~\ref{lem:reduction_to_ILP} allows us to solve \problemvdshort\ on trees by applying initial branching to reach a situation satisfying the preconditions of Lemma~\ref{lem:reduction_to_ILP}.

\ifshort
\begin{theorem}[$\star$]\fi\iflong{}\begin{theorem}\fi~\label{thm:No_agents}
For each $c\in \mathbb{N}$, \problemvdshort\ restricted to instances with at most $c$ agents is polynomial-time solvable on trees.
\end{theorem}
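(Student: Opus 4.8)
The plan is to reduce the problem to a polynomially bounded number of invocations of Lemma~\ref{lem:reduction_to_ILP}, each with a carefully chosen small edge set $F$. The key ingredient I would establish first is a structural claim: every envy-free vertex-disjoint assignment $\pi$ on a tree with at most $c$ agents admits a set $F\subseteq E(G)$ with $|F|\le c-1$ such that each connected component of $G-F$ lies within a single piece of $\pi$. Granting this, the algorithm simply enumerates every edge set $F$ with $|F|\le c-1$ -- of which there are only $\bigoh(|E(G)|^{c-1})$, hence polynomially many for fixed $c$ -- and runs the algorithm of Lemma~\ref{lem:reduction_to_ILP} on each. If some branch returns an envy-free assignment we output it, and otherwise we report that none exists.

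To prove the structural claim I would take $F$ to be exactly the set of \emph{split} edges, i.e.\ those not entirely contained in a single piece. First, every component of $G-F$ lies in one piece: if two edges surviving in $G-F$ share a vertex $w$, then each contains $w$ in its piece, and since $\pi$ is vertex-disjoint $w$ lies in exactly one piece, forcing both edges into the same piece; connectivity then propagates this over the whole component. It remains to bound $|F|$. Call a piece a \emph{vertex-piece} if it contains a vertex, and otherwise (being a single interval strictly inside one edge) an \emph{interior piece}; say there are $p$ of the former and $q$ of the latter, so $p+q=|A|\le c$. Since each vertex is assigned to exactly one piece, the vertex sets of the vertex-pieces partition $V(G)$ into connected subtrees, so contracting each to a node turns $G$ into a tree on $p$ nodes and yields exactly $p-1$ edges whose endpoints lie in distinct vertex-pieces; all of these are split. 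Any other split edge has both endpoints in one vertex-piece, so the part of it lying outside that piece is a floating interval that -- by connectivity -- can only belong to an interior piece, and since interior pieces inhabiting distinct edges are disjoint, the number of such edges is at most $q$. Hence $|F|\le(p-1)+q\le(p-1)+(c-p)=c-1$, as claimed.

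With the claim in hand, correctness follows: if a solution exists it supplies a witnessing $F$ of size at most $c-1$, and since that solution is envy-free, vertex-disjoint, and has each component of $G-F$ inside a single piece, the corresponding call to Lemma~\ref{lem:reduction_to_ILP} is guaranteed to return a valid assignment; conversely the lemma never returns an invalid one, so if all branches fail we correctly declare a \texttt{No}-instance. For the running time, there are $\bigoh(|E(G)|^{c-1})$ choices of $F$, and for each the lemma runs in time polynomial in $|A|^{|F|+1}\cdot|F|^{|A|}\cdot|\mathcal{I}|$, which is polynomial since the fixed constant $c$ bounds both $|A|$ and $|F|+1$. The main obstacle I anticipate is the structural claim, and in particular the connectivity argument that a split edge with both endpoints in a single vertex-piece must cede its interior to a genuine interior piece: making this accounting tight enough to bound $|F|$ by a function of $c$ alone (rather than by the number of vertices) is precisely what keeps the branching polynomial.
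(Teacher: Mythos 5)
Your proposal is correct, and at the algorithmic level it coincides with the paper's: both reduce the theorem to a structural claim that some envy-free vertex-disjoint solution is witnessed by an edge set $F$ with $|F|\le |A|-1$ whose removal leaves only monochromatic components, then enumerate all polynomially many candidate sets $F$ and invoke Lemma~\ref{lem:reduction_to_ILP}. Where you genuinely diverge is in the proof of the structural claim. The paper roots the tree at an arbitrary vertex $r$, charges to each non-root agent $a$ a single ``topmost'' edge $e_a$ (the first edge met on the way from $a$'s piece toward $r$), takes $F=\{e_a \mid a\neq a_r\}$, and then runs an induction on the distance from the root of each component of $G-F$ to show monochromaticity. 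You instead take $F$ to be exactly the set of split edges and bound its size by a global counting argument: classify the $|A|$ pieces into $p$ vertex-pieces and $q$ interior pieces, observe that contracting the (connected, pairwise disjoint) vertex sets of the vertex-pieces yields a tree with exactly $p-1$ inter-piece edges, and charge every remaining split edge injectively to an interior piece it hosts, giving $|F|\le (p-1)+q=|A|-1$. Your accounting is sound -- in particular, a piece containing no vertex really must be a single interval strictly inside one edge, since adjacency of edge-pieces is only possible through endpoints, and vertex-disjointness forces the endpoint-segments of an intra-piece split edge back into the vertex-piece owning its endpoints. What your route buys is a cleaner, induction-free argument that also identifies the \emph{minimal} valid $F$ and isolates precisely where vertex-disjointness is used (which is instructive, given that the claim must fail for \problemshort{} by Theorem~\ref{the:NP-ps}); what the paper's charging argument buys is the slightly stronger statement (II) about how each edge of $F$ is shared, though that extra structure is not needed to apply Lemma~\ref{lem:reduction_to_ILP}.
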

\ifshort
\begin{proof}[Proof Sketch]
\fi
\iflong
\begin{proof}
\fi
Let \(\tempAsgn\) be an arbitrary vertex-disjoint envy-free
assignment. Since $G$ is a tree and agents do not share vertices,
\ifshort{}one can show\fi\iflong{}we claim\fi \ that there is a set
$F$ of at most $|A|-1$ edges with the following properties: 
(I) each of the exactly $|F|+1$
components of $G-F$ is assigned to a single agent, and (II) every edge $e\in
F$ is split between the two agents assigned to the components that
contain a vertex incident with~$e$ plus some of the agents who are not assigned any of
these components.

 \iflong{}To see the claim, let us
root $G$ in some arbitrary vertex $r$. Let $a_r$ be the unique agent
that contains the vertex $r$ in some piece of \(\tempAsgn(r)\). For every other agent $a$, there is a unique point $x$ in the closure of pieces
of $a$ such that the unique path from $x$ to $r$ does not intersect any piece of $a$ (note that $x$ is not necessarily a vertex and could be a point inside some edge). Given this point we assign to the agent $a$ a single edge $e_a$ as follows: 
	\begin{itemize}
		\item if $x$ is inside some edge $f$, then we let $e_a=f$,
		\item if $x$ is a vertex of $G$ and there is a unique edge $f$ that contains a piece of $a$, then we let $e_a=f$, and 
		\item else $x$ is a vertex of $G$ incident to at least $2$ edges that get a piece of $a$. We let $e_a$ be the unique edge incident on $x$ on the unique $x$-$r$ path in $G$. 
	\end{itemize}
	Let $F=\{e_a\mid a\in A\setminus \{a_r\}\}$. We show that each connected component of $G-F$ is assigned to a single agent. The rest of the claim then follows straightforwardly  from the connectivity of the pieces assigned to single agents. 
	Let $T$ be a subtree of $G$ induced by some connected
        component of $G-F$. We can root $T$ in the unique vertex that
        is ``closest'' to $r$. That is the vertex $r_T$ such that the
        unique $r_T$-$r$ path does not contain any vertex of $T$. We
        show that all edges of $T$ are assigned to some agent $a$ such
        that $e_a$ is the unique edge from $r_T$ to its parent in
        $G$. We distinguish two possibilities. If all agents contain a
        piece in at most one edge incident with $r_T$, then it is easy
        to see that all edges incident with $r_T$ are in
        $F$. Therefore, $T$ is a singleton and the claim
        follows. Else, if there is an agent $a$ such that $a$ gets
        pieces in at least two edges incident on $r_T$. Then $e_a$ is
        the edge from $r_T$ to its parent and $a$ is the unique such
        agent; i.e., all other agents get pieces in at most one edge
        incident on $r_T$. We show by induction on the distance from $r_T$ that all edges in $T$ are fully assigned to $a$ by $\tempAsgn$. For the base case observe that if some agent contains a piece of an edge $f$ incident with $r_T$, then $f\in F$ and in particular $f\notin E(T)$. Therefore all the edges incident of $r_T$ in $T$ are assigned to $a$ by $\tempAsgn$.
	Now let us consider a vertex $v\in V(T)$ at distance $i$ from
        $r_T$. We know, by the induction hypothesis, that the unique
        edge $f$ between $v$ and its parent in $T$ is fully assigned
        to $a$. If there is an agent $a'$ that contains pieces in at
        least $2$ edges incident o $v$, then $e_{a'}=f$, which is not
        possible as $f\in E(T)$. Therefore every agent $a'$ (other
        than $a$) contains a piece in at most one edge incident with $v$ and if $a'$ contains a piece inside edge $f'$ incident on $v$, then it is easy to see that $e_{a'}=f'$ and $f'\notin E(T)$. Hence all edges incident with $v$ in $T$ are also fully assigned to $a$ and the claim follows by induction. We note that the same discussion works even if $r_T=r$, in which case we show that $T$ is assigned to $a_r$ analogously.  	
	\fi    

We can now enumerate all of the at most $|V(G)|^{|A|}$ possible sets
of edges $F\subseteq E(G)$ such that $|F|\le |A|-1$. The theorem then follows by applying Lemma~\ref{lem:reduction_to_ILP} to $G$ and $F$.

Using the notation and the running time bound from the proof of Lemma~\ref{lem:reduction_to_ILP}, it follows that the overall complexity lies in \(\mathcal{O}(|V(G)|^{|A|} \cdot |A|^{|A|} \cdot (|A| - 1)^{|A|} \cdot T(|A|^2))\).
\end{proof}

To complete our understanding of \problemshort{} and \problemvdshort{} for instances with only boundedly-many agents, we show that both problems are also polynomial-time solvable on graphs of maximum degree $2$ (i.e., cycles) and that the former is polynomial-time solvable on bounded-degree trees. These results follow a similar approach as the proof of Theorem~\ref{thm:No_agents}, which is a combination of initial branching and Lemma~\ref{lem:reduction_to_ILP}.

\iflong \begin{theorem} \fi
\ifshort \begin{theorem}[$\star$]\fi
\label{thm:treedegreeP}
	For each $c,d\in \mathbb{N}$, \problemshort{} restricted to instances with at most $c$ agents is polynomial-time tractable on trees with maximum degree at most $d$.
\end{theorem}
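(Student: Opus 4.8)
The plan is to mirror the strategy behind Theorem~\ref{thm:No_agents}: combine an initial enumeration of a small candidate cut set $F$ with the linear-programming subroutine of Lemma~\ref{lem:reduction_to_ILP}. The only genuinely new ingredient is the bound on $|F|$, which must now account for the fact that in \problemshort{} a single vertex may be shared by several agents; this is precisely what the degree bound $d$ will keep under control.

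First I would establish the structural backbone. Fix any envy-free assignment \tempAsgn. The key observation is that, since an edge-piece $(e,I)$ with $I$ a proper subinterval contains at most one of the two endpoints of $e$, the piece of an agent $a$ can connect ``outward'' through a split edge only on the side of the endpoint it contains; consequently the region of each agent is a single connected subtree of $G$ whose only partial (split) edges occur on its boundary. Root $G$ at an arbitrary vertex $r$ and let $a_r$ be an agent whose piece contains $r$. Exactly as in Theorem~\ref{thm:No_agents}, for every other agent $a$ there is a well-defined topmost point $x_a$, namely the point in the closure of the pieces of $a$ whose path to $r$ does not intersect the interior of $a$'s region. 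I would then define $F$ by charging edges to these topmost points: if $x_a$ lies in the interior of an edge $e$ (this also covers the degenerate case where $a$ owns only a middle chunk of $e$), add $e$ to $F$; if $x_a$ is a vertex $v$, add \emph{all} edges incident to $v$ to $F$.

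The bound $|F| \le (c-1)\cdot d$ then follows immediately: there are at most $c-1$ distinct topmost points, and each contributes either a single edge or, when it is a vertex $v$, at most $\deg(v)\le d$ edges. I would next argue that every connected component of $G-F$ is assigned entirely to a single agent. No component contains a split edge, since every interior topmost point removes its edge; and no interior vertex $v$ of a component is shared by two agents, since every shared vertex is the topmost point of some agent whose region lies below it (that agent reaches $r$ only through $v$'s parent edge, which belongs to another agent), whence all of $v$'s incident edges were removed. Thus ownership cannot change along a monochromatic path inside a component, which yields monochromaticity by a downward induction on the distance from $r$ identical in spirit to the one in Theorem~\ref{thm:No_agents}.

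With this cut set in hand the algorithm enumerates every $F\subseteq E(G)$ with $|F|\le (c-1)d$; there are $|E(G)|^{\bigoh((c-1)d)}$ such sets, which is polynomial for fixed $c$ and $d$. For each candidate $F$ I would invoke Lemma~\ref{lem:reduction_to_ILP} on $G$ and $F$; since $|A|\le c$ and $|F|\le (c-1)d$ are constants, each call runs in time polynomial in $|\mathcal{I}|$. If some candidate yields an envy-free assignment we output it, and otherwise we correctly report that none exists, since the cut set extracted from any hypothetical solution \tempAsgn{} is among the enumerated candidates and Lemma~\ref{lem:reduction_to_ILP} is then guaranteed to find a compatible assignment. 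I expect the main obstacle to be the structural claim that each agent's region is a single connected subtree together with the accompanying induction establishing monochromaticity of the components of $G-F$; this is where the analysis of how partial edge-pieces connect (and the middle-chunk corner case) is delicate, whereas turning the at-most-$(c-1)$ shared meeting points into at most $(c-1)d$ cut edges, the enumeration, and the appeal to Lemma~\ref{lem:reduction_to_ILP} are routine once the structure is in place.
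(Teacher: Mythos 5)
Your proposal is correct and follows essentially the same strategy as the paper's proof: both identify a cut set of $\bigoh(cd)$ edges containing every split edge and every edge incident to a vertex shared by two agents, enumerate all polynomially many candidates for it, and invoke Lemma~\ref{lem:reduction_to_ILP} on each. The only cosmetic difference is the direction of the charging argument---the paper injects the at most $|A|$ shared vertices and edges into the agent set, whereas you charge each agent to its topmost point as in Theorem~\ref{thm:No_agents}---but both yield the same bound and the same algorithm.
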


\iflong
\begin{proof}
	We show that there are at most $|A|$ many vertices and edges of $G$ than can be in a piece of more than one agent. Given this claim, we can enumerate all of the at most $(2\cdot|V(G)|-1)^{|A|}$ possible sets
	of vertices $U\subseteq V(G)$ and edges $F\subseteq E(G)$ such that $|U\cup F|\le |A|$. Let $F'$ be the union of $F$ and the edges incident with vertices in $U$. Note that $|F'|\le |A|\cdot \Delta$, where $\Delta$ is the maximum degree of a vertex in $G$. The theorem then follows by applying Lemma~\ref{lem:reduction_to_ILP} to $G$ and $F'$ in each branch. 
	
	The rest of the proof is devoted to showing that there are at most $|A|$ many vertices and edges of $G$ than can be in a piece of more than one agent.
	Let \(\tempAsgn\) be an arbitrary envy-free assignment. 	
	We start the proof by showing that there are at most $|A|$ many vertices and edges of $G$ than can be in a piece of more than one agent. Let $U$ be the set of such vertices and $F$ the set of such edges. That is for every vertex $u\in U$ there exists 
		two different agents $a_1$ and $a_2$ such that both \(\tempAsgn(a_1)\) and \(\tempAsgn(a_2)\) contain a piece that contain $u$. Similarly, for every edge $e\in F$ there are two agents $a_1$ and $a_2$ such that both \(\tempAsgn(a_1)\) and \(\tempAsgn(a_2)\) contain a piece of the edge $e$.
		We will construct an injective function $f\colon U\cup F\rightarrow A$, that is we show that we can assign to every element (vertex or edge) in $U\cup F$ a unique agent. Let us root $G$ in some arbitrary vertex $r$. For the rest of the proof, for a vertex $v$, we will denote by $p_v$ the parent of the vertex $v$ in the rooted tree $G$ rooted in $r$. 
	For a vertex $u\in U$, at most one agent contains a piece of the edge $up_u$ from $u$ to its parent $p_u$ that contains the vertex $u$. Since there are at least two agents whose pieces contain the vertex $u$, there is at least one other agent $a$ such that $a$ does not get a piece of the edge $up_u$, but gets a piece of some other edge that contains the vertex $u$. We let $f(u)=a$.
	For an edge $e\in F$ let $f = up_u$, where $p_u$ is the parent of $u$. We let $f(e)= a$, where $a$ is the agent that gets the piece containing $u$. Note that since at least two pieces contain a piece of the edge $e$, it follows that the piece of agent $a$ does not contains $p_u$.
	
	Now let $x$ and $y$ be two different elements of $U\cup F$. We show that $f(x)\neq f(y)$.
	Clearly, if $f(x)=f(y) = a$, then the piece assigned to the agent $a$ under \(\tempAsgn\) contains a piece of both $x$ and $y$. We will show that if an agent $a$ such that $\tempAsgn(a)$ contains piece of both $x$ and $y$, then either $f(x)\neq a$ or $f(y)\neq a$.
	Let $u$ denote the vertex $x$ if $x\in U$ otherwise let $u$ be the vertex such that $x$ is the edge $up_u$ in $F$. Similarly, let $v$ be $y$ if $y\in U$ else let $v$ be the vertex such that $y$ is the edge $vp_v$. Note that if $f(x)=a$, then $\tempAsgn(a)$ contains a piece containing the vertex $u$. Similarly, if $f(y)=a$, then $\tempAsgn(a)$ contains $v$. Since piece of the agent $a$ is connected, it follows that $\tempAsgn(a)$ contains a path from $u$ to $v$, moreover since $G$ is a tree, this path is unique.
	Now note that if $v$ is an ancestor of $u$, then the unique path from $u$ to $v$ contains the edge $up_u$ and from the definition of $f$ it follows $f(x)\neq a$. If $u$ is an ancestor of $v$, then the unique path from $u$ to $v$ in $G$ contains the edge $vp_v$, and $f(y)\neq a$. Finally, if $v$ is not ancestor of $u$ and $u$ is not an ancestor of $v$, then the unique path from $u$ to $v$ contains both edges $up_u$ and $vp_v$, hence $f(x)\neq a$ and $f(y)\neq a$. 
	
	Using the notation and the running time bound from the proof of Lemma~\ref{lem:reduction_to_ILP}, it follows that the overall complexity lies in \(\mathcal{O}((2|V(G)| - 1)^{|A|} \cdot |A|^{|A|\Delta + 1} \cdot (|A|\Delta)^{|A|} \cdot T(|A|\Delta|A| + |A|))\).
\end{proof}
\fi

\ifshort \begin{theorem}[$\star$]\fi 
\iflong \begin{theorem} \fi
\label{thm:cycles}
	For each $c\in \mathbb{N}$, \problemshort\ and \problemvdshort\ restricted to instances with at most $c$ agents is polynomial-time solvable on cycles.
\end{theorem}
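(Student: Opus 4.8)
The plan is to mirror the strategy of Theorem~\ref{thm:No_agents} and Theorem~\ref{thm:treedegreeP}: first isolate a structural property that every envy-free assignment on a cycle must satisfy, use it to pin down a small set $F$ of edges across which the pieces of distinct agents meet, and then invoke Lemma~\ref{lem:reduction_to_ILP} (which is stated for cycles as well as trees) to finish. Since deleting any nonempty $F$ from a cycle leaves a disjoint union of paths, the precondition of the lemma is automatically met once $F$ is fixed, so the only genuine work is to bound $|F|$ and to argue that guessing $F$ by brute force is affordable when the number of agents is a constant.

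For the structural claim, fix any envy-free assignment \(\tempAsgn\). Because $G$ is a cycle and each piece of \(\tempAsgn\) is connected, each piece is a single contiguous arc; consequently the cycle is partitioned into at most $|A|$ arcs, and the boundaries between consecutive arcs constitute at most $|A|$ cut points. Each cut point either lies strictly inside an edge or coincides with a vertex. Collecting, for every boundary, either the edge containing it or one of the (at most two) edges meeting at the boundary vertex, we obtain a set $F$ with $|F| \le |A| \le c$. Any two adjacent edges that both avoid $F$ meet at a vertex that is not a boundary and are themselves unsplit, hence lie in the same arc and thus belong to the same agent; therefore every connected component of $G - F$ is assigned to a single agent, exactly as required by Lemma~\ref{lem:reduction_to_ILP}.

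For the algorithm, we do not know \(\tempAsgn\) in advance, so we enumerate all candidate edge sets $F \subseteq E(G)$ with $|F| \le c$; there are only $\bigoh(|E(G)|^{c})$ of them, which is polynomial for fixed $c$. For each candidate we run the algorithm of Lemma~\ref{lem:reduction_to_ILP} on $G$ and $F$; since $|A| \le c$ and $|F| \le c$ are both bounded by the constant $c$, each call runs in time polynomial in $|\mathcal{I}|$. We output the first envy-free assignment produced and report infeasibility only if every branch of every candidate fails. Correctness is immediate: if a solution exists, the set $F$ extracted from it as above is among the enumerated candidates and the lemma returns an envy-free (and, for \problemvdshort, vertex-disjoint) assignment consistent with it; conversely, any assignment returned by the lemma is by construction a valid solution.

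The main obstacle is making the structural claim robust against the degenerate configurations that cycles permit: an agent whose entire piece lies strictly inside one edge (placing two boundaries inside a single edge), agents receiving empty pieces, and---in the vertex-disjoint variant---a boundary at a vertex that must be awarded to exactly one of its two neighbouring agents. None of these inflate the bound, since two boundaries inside one edge still contribute that edge only once and an empty piece creates no new arc, so distinct edges carrying boundaries remain at most $|A|$. Moreover, these are precisely the cases already absorbed by Lemma~\ref{lem:reduction_to_ILP}, whose branching explicitly allows an agent's piece to lie wholly inside an edge of $F$ and whose linear program enforces vertex-disjointness. Once the counting bound $|F| = \bigoh(c)$ is in place, the remainder is a direct application of the lemma, and the running time follows from that of Lemma~\ref{lem:reduction_to_ILP} multiplied by the $\bigoh(|E(G)|^{c})$ enumeration factor.
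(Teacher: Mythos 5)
Your proposal is correct and follows essentially the same route as the paper: observe that an envy-free partition of a cycle into $|A|$ connected arcs yields at most $|A|$ boundary edges, enumerate all $\bigoh(|E(G)|^{c})$ candidate sets $F$ of at most that many edges, and invoke Lemma~\ref{lem:reduction_to_ILP} on each. The paper's own proof states the structural bound as an immediate observation, whereas you spell out the boundary-point argument and the degenerate cases, but the decomposition and the reduction are identical.
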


\iflong
\begin{proof}
	It is straightforward to see that in any partition of a cycle graph $G$ into $|A|$ connected pieces there are at most $|A|$ edges that are assigned to two different agents.
	The theorem then follows by branching on all $|V(G)|^{|A|}$ many possibilities of at most $|A|$ many edges shared by more than one agent. The theorem then follows by applying Lemma~\ref{lem:reduction_to_ILP} in each branch. 
	
	Using the notation and the running time bound from the proof of Lemma~\ref{lem:reduction_to_ILP}, it follows that the overall complexity lies in \(\mathcal{O}(|V(G)|^|A| \cdot |A|^{2|A| + 1} \cdot T(|A|^2 + |A|))\).
\end{proof}
\fi

\section{Instances with Few Edges}

In the section we provide an algorithm showing that both \problemshort\ and \problemvdshort\ are in \XP\ parametrized by the number of edges in the input graph $G$.

	\begin{theorem}\label{thm:No_edges_main}
		\problemshort\ and \problemvdshort\ can be solved in time \(|A|^{\bigoh(|E(G)|^2)}\).
	\end{theorem}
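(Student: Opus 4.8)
Write $m=|E(G)|$ and $n=|A|$. The plan is to exploit that, with few edges, the ``shape'' of any envy-free solution \tempAsgn\ is so constrained that after guessing a bounded amount of combinatorial data the remaining continuous choices can be resolved by the cell-enumeration tool (Proposition~\ref{prop:sample_points}) together with a matching subroutine. First I would fix the structural backbone. Since utilities are normalized, every agent must receive positive value in any envy-free solution: if $u_a(\tempAsgn(a))=0$ then envy-freeness forces $u_a(\tempAsgn(b))=0$ for all $b$, contradicting $\sum_b u_a(\tempAsgn(b))=u_a(G)=1$. I then classify agents into \emph{frame} agents, whose piece contains a vertex, and \emph{floating} agents, whose piece is a single sub-interval in the interior of one edge (not reaching either endpoint). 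Any piece containing an interior point of $e$ that also reached a vertex would have to reach an endpoint of $e$, so interior points belong only to floating agents; and at each endpoint of $e$ the incident edge-piece belongs to a single agent, so each edge carries at most two frame pieces, one at each end. As each frame piece uses at least one of the $2m$ endpoint-incident intervals, there are at most $2m=\bigoh(m)$ frame agents. Moreover, on each edge the floating agents all value it positively, so the pairwise no-envy inequalities between them force all their intervals to have equal length: the middle of $e$ is a single interval $[x_e,y_e]$ split into $p_e$ equal parts of common length $q_e=(y_e-x_e)/p_e$.

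Next I would branch to fix the combinatorial skeleton: (i) the connected frame structure, i.e.\ which whole edges and which endpoint intervals form each frame piece and how they connect across vertices (for \problemvdshort, additionally requiring that at each vertex all incident edge-pieces lie in one piece); (ii) the identity in $A$ of each of the $\bigoh(m)$ frame agents; and (iii) the integer counts $p_e$ of floating agents on each edge, with $\sum_e p_e$ equal to the number of non-frame agents. There are $m^{\bigoh(m)}$ skeletons, $n^{\bigoh(m)}$ identity assignments, and $n^{\bigoh(m)}$ count vectors. Crucially, once the $p_e$ are fixed the only remaining unknowns are the $2m$ cut coordinates $x_e,y_e$, and every relevant quantity — the value of each frame piece to each agent, and each floating-slot value $q_e u_a(e)$ — is an affine function of these coordinates.

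The heart of the argument is deciding feasibility of this continuous part jointly with the placement of floating agents. I would collect, for every agent $a$ and every edge $e$, the $\bigoh(m)$ affine polynomials comparing $q_e u_a(e)$ against $a$'s value for each frame piece and against each other slot value $q_{e'}u_a(e')$; this yields $s=\bigoh(nm^2)$ linear polynomials in $d=\bigoh(m)$ variables, so Proposition~\ref{prop:sample_points} produces $\bigoh(s^d)=n^{\bigoh(m)}m^{\bigoh(m)}$ sample points meeting every cell, at each of which the sign of every comparison is known. The decisive observation is that the no-envy conditions \emph{decouple}: whether an agent $b$ envies the floating region of edge $e$ depends only on $b$, on $q_e$ and on $b$'s own fixed value, not on which agent sits there, since every slot on $e$ has the same value $q_e u_b(e)$ to $b$. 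Hence at a fixed sample point the conditions ``nobody envies any frame piece or any floating region'' reduce to sign checks, and the only remaining question — can the floating agents be distributed so that edge $e$ gets exactly $p_e$ of them without any placed agent envying anything — becomes a capacitated bipartite matching in which $a$ is compatible with $e$ iff all its sign conditions hold and $e$ has capacity $p_e$; this is solvable in polynomial time, and \tempAsgn\ exists iff some branch and some sample point admit a saturating matching.

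Collecting the running time, we have $m^{\bigoh(m)}\cdot n^{\bigoh(m)}\cdot n^{\bigoh(m)}$ branches, each invoking Proposition~\ref{prop:sample_points} for $n^{\bigoh(m)}m^{\bigoh(m)}$ sample points followed by a polynomial matching, giving $n^{\bigoh(m)}m^{\bigoh(m)}\cdot\mathrm{poly}(|\mathcal{I}|)$; since $m^{\bigoh(m)}\le n^{\bigoh(m^2)}$ whenever $n\ge 2$ (and $n=1$ is trivial), this is $|A|^{\bigoh(|E(G)|^2)}$ as claimed, the squared exponent arising precisely from absorbing the $m^{\bigoh(m)}$ term into a bound phrased in $|A|$. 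The main obstacle I anticipate is proving the structural lemma rigorously — verifying the frame/floating dichotomy and the equal-length property across all boundary configurations (whole edges, single-interval frame pieces, and frame pieces that wrap around cycles through both ends of an edge) and showing that the decoupled affine system captures envy-freeness exactly — since both the dimension count feeding Proposition~\ref{prop:sample_points} and the reduction to matching rest on it; a secondary point is to correctly restrict the skeleton enumeration to encode the vertex-disjointness requirement for \problemvdshort.
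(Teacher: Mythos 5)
Your proposal is correct in its overall architecture and reaches the same theorem, but it takes a genuinely different route through the key technical step. The first and last phases coincide with the paper's: you both establish that at most $2|E(G)|$ agents receive endpoint pieces, that interior pieces on one edge have equal length (the paper's Observation~\ref{obs:same_length_pieces_on_edge}), branch on the combinatorial skeleton and the per-edge counts, and finish with a bipartite matching of the remaining agents. The divergence is in the middle. The paper keeps the cut coordinates as \emph{unknowns} of a linear program, and therefore must decide symbolically which envy constraints to impose; since the identity of the interior agent ``closest to envying'' a multi-edge agent depends on those unknowns, the paper guesses envy-critical agents (an $|A|^{\bigoh(k^2)}$ branching) and uses the cell-enumeration tool only to enumerate the possible agent orderings that govern these guesses, then solves the LP. You instead eliminate $\delta_e$ via $q_e=(y_e-x_e)/p_e$, put \emph{every} envy comparison and feasibility condition into the polynomial system over the $2m$ cut coordinates, and use the sample points themselves as candidate solutions: since envy-freeness and feasibility are sign conditions on linear polynomials, the sample point lying in the same cell as a true solution is itself a valid solution with the same compatibility relation, so the LP and the envy-critical machinery disappear entirely, and your decoupling observation (every slot on $e$ is worth $q_e\val_b(e)$ to $b$ regardless of occupant) is exactly what makes the final matching sound in both proofs. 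Your version is arguably simpler and even yields a slightly better bound ($|A|^{\bigoh(m)}m^{\bigoh(m)}$ rather than $|A|^{\bigoh(m^2)}$). Two points to nail down when writing it out: the polynomial family you describe omits the frame-versus-frame envy comparisons and the feasibility polynomials $x_e$, $1-y_e$, $y_e-x_e$ (all of which must be in $\mathcal{P}$ for the ``same cell $\Rightarrow$ same validity'' transfer to go through, though adding them does not change the asymptotics), and the equal-length argument needs the observation, which you do make, that every agent's own edge valuation is strictly positive in any envy-free solution.
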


The algorithm can be divided into three main steps. We start with a direct brute-force branching over all assignments of agents that span more than one edge, and for these special agents we identify precisely the edges from which they will receive a piece. We also branch to determine the exact number of agents that will be assigned to each edge. This will result in $|A|^{\bigoh(|E(G)|)}$ many initial branches, and each branch already provides useful information about a hypothetical sought-after solution---but not enough to solve the problem. 
Crucially, every solution to the original instance corresponds to one of the branches. 

Our aim in the second step will be to construct, for each branch, a Linear Program (LP) to determine the exact lengths of all the pieces in an envy-free partitioning. In particular, if the branch corresponds to a solution, then we require that the LP outputs a partitioning that can be matched to agents in a way which also produces a solution. Unfortunately, the branching carried out in the previous step is not yet sufficient to construct such an LP: during the construction, we need to apply an additional advanced branching step to identify a small number of \emph{envy-critical} agents that are assigned completely to a single edge. The property of these agents is that they will be the ``closest'' to envying agents assigned to other edges in the graph, and in the LP these will serve as anchors which ensure that an envy-free assignment will exist as long as the assignment of agents is carried out in a way which respects the selected envy-critical agents. Defining, bounding the number of, and branching on these envy-critical agents is the most challenging part of the algorithm, and is also where 
\iflong Proposition~\ref{prop:sample_points} will be used.
\fi 
\ifshort Theorem 13.22~\cite{BasuPR06} is used.
\fi

Finally, based on the branching decisions and a solution to the constructed LP, we design an instance of bipartite matching that matches the remaining unassigned agents with the pieces given by the LP instance. If a matching exists we are guaranteed to have found a solution; if not, then our branch does not correspond to a valid solution.

\smallskip
\noindent \textbf{Initial Branching.}\quad
For the remainder of this section, we fix an instance \(\mathcal{I}\) of \problemshort\ or \problemvdshort\ given by the set of agents \(A\), graph \(G=(V,E)\), and utilities \(\val_a : E \to \mathbb{Q}^+\) for each agent \(a \in A\). Denote $k=|E(G)|$.
We can now start with the branching phase. Let us assume $\III$ admits an envy-free assignment \(\tempAsgn\) into some partition of $G$ into pieces; we will describe the branching as a series of ``guesses'' of the properties of this solution $\III$ and its interactions with $G$. 

First, observe that for each edge $e$ there are at most $2$ agents that can be assigned a piece of the edge $e$ together with a piece of some other edge. These are the agents in \(\tempAsgn\) that receive the piece $(e, [0,c])$ and the piece $(e, [d,1])$ for some constant $c,d\in [0,1]$. For each edge $e$, we guess the agent that gets the piece $(e,[0,c])$ (for some unspecified $c\in [0,1]$) and say that this is the guess for pair $(e,0)$; analogously, we guess the agent that gets the piece $(e,[d,1])$ (for some unspecified $d\in [0,1]$) and say that this is the guess for pair $(e,1)$. This results in $|A|^{2k}$ many branches. Let $A_V$ be the set of the at most $2k$ agents guessed in the previous step. 

All the remaining agents are assigned by \(\tempAsgn\) a piece $\{(e,[c,d])\}$ for some edge $e$. For every such agent, we say that it gets a piece \emph{fully contained inside edge $e$}. While it is too computationally expensive to guess precisely which agents get a piece fully contained in an edge $e$, we will guess the number of agents that get a piece fully contained in an edge $e$. This results in $|A|+1$ many guesses for each edge, amounting to a branching factor of at most $(|A|+1)^k$. Let us denote by $n_e$ the number of agents that get a piece fully contained in the edge $e$. We now perform a set of sanity checks on our branching; in particular, we discard branches which do not fulfil the following conditions:

\begin{enumerate}
	\item $|A_V|+\sum_{e\in E(G)} n_e=|A|$.
	\item For every agent $a\in A_V$, the guesses of pieces assigned to $a$ form a connected subset of $G$.
		 More formally, whenever our branch assigns an agent $a$ to two distinct pairs $(e,i)$ and $(f,j)$, where $e,f\in E(G)$ and $i,j\in \{0,1\}$, there must exist a path $P=e_1e_2\ldots e_q$ from $e[i]$ to $f[j]$ such that for each $\iota\in [q]$ it holds that ($\texttt{I}$) $n_{e_{\iota}}=0$ and ($\texttt{II}$) the agent $a$ is also the guess for both $(e_{\iota},0)$ and $(e_{\iota},1)$.
		\item In the case of \problemvdshort, we will also verify that the branching corresponds to a vertex-disjoint partition. In particular, for each vertex $v$ let $E_{v,0}$ be the set of edges incident to $v$ and a vertex preceding $v$ in the ordering and $E_{v,1}$ be the set of edges incident to $v$ and a vertex succeeding $v$ in the ordering. We check that there is a single agent $a$ such that for each edge $e\in E_{v,0}$, $a$ is the guess for $(e,0)$, and at the same time for each edge $e'\in E_{v,1}$, $a$ is the guess for $(e',1)$.
\end{enumerate}

\smallskip
\noindent \textbf{Linear Programming.}\quad
We can now begin describing the instance of LP that we will use to determine the partition. For this, it will be useful to observe that agents fully contained in the same edge must receive a segment of the same length.
\begin{observation}\label{obs:same_length_pieces_on_edge}
	Let $e\in E(G)$ be an edge and $a_1,a_2\in A$ two agents such that \(\tempAsgn(a_1)= \{(e,[x_1,y_1])\}\) and \(\tempAsgn(a_2)= \{(e,[x_2,y_2])\}\), then $|y_1-x_1| = |y_2-x_2|$. 
\end{observation}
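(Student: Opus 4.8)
The plan is to first establish that both $a_1$ and $a_2$ have strictly positive utility for the edge $e$, after which the claim reduces to a short two-sided envy comparison. Since $\tempAsgn(a_1) = \{(e,[x_1,y_1])\}$, the entire contribution to $a_1$'s utility comes from the single edge $e$, so $\val_{a_1}(\tempAsgn(a_1)) = |y_1-x_1|\cdot \val_{a_1}(e)$, while $\val_{a_1}(\tempAsgn(a_2)) = |y_2-x_2|\cdot \val_{a_1}(e)$; the symmetric identities hold for $a_2$. These identities are what let envy comparisons collapse to comparisons of lengths.

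To see that $\val_{a_1}(e) > 0$, I would invoke the normalization assumption $\val_{a_1}(G) = 1$ together with the fact that the pieces of $\tempAsgn$ partition $G$. Because the pieces of each edge tile $[0,1]$, summing $a_1$'s utility over all agents collapses edge by edge to $\sum_{a' \in A}\val_{a_1}(\tempAsgn(a')) = \sum_{e' \in E}\val_{a_1}(e') = \val_{a_1}(G) = 1$. Envy-freeness gives $\val_{a_1}(\tempAsgn(a_1)) \geq \val_{a_1}(\tempAsgn(a'))$ for every $a'$, so $a_1$'s own utility is at least the average $\tfrac{1}{|A|} > 0$. As this utility equals $|y_1-x_1|\cdot \val_{a_1}(e)$, both factors must be nonzero; in particular $\val_{a_1}(e) > 0$. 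The identical argument yields $\val_{a_2}(e) > 0$.

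With positivity in hand, envy-freeness of $a_1$ towards $a_2$ reads $|y_1-x_1|\cdot \val_{a_1}(e) \geq |y_2-x_2|\cdot \val_{a_1}(e)$, and dividing through by $\val_{a_1}(e) > 0$ gives $|y_1-x_1| \geq |y_2-x_2|$. Symmetrically, applying envy-freeness of $a_2$ towards $a_1$ and dividing by $\val_{a_2}(e) > 0$ gives $|y_2-x_2| \geq |y_1-x_1|$. Combining the two inequalities yields the desired equality $|y_1-x_1| = |y_2-x_2|$.

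The step I would flag as the crux is the positivity of $\val_{a_i}(e)$: without it, a zero-utility edge would leave the two lengths unconstrained relative to one another and the statement could fail. The averaging argument through normalization is precisely what excludes this degenerate case, and the remaining comparison of lengths is then immediate.
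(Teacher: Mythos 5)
Your proof is correct and its core is the same as the paper's: write each agent's utility for each of the two pieces as (length)\,$\cdot\,\val_{a_i}(e)$ and apply envy-freeness in both directions to squeeze the two lengths together. The one place you go beyond the paper is the positivity argument: the paper's proof simply states that envy-freeness ``implies'' $|y_1-x_1|\ge|y_2-x_2|$ from $|y_1-x_1|\cdot\val_{a_1}(e)\ge|y_2-x_2|\cdot\val_{a_1}(e)$, silently dividing by $\val_{a_1}(e)$, whereas you explicitly rule out $\val_{a_i}(e)=0$ via normalization, the fact that the pieces tile every edge, and the averaging bound $\val_{a_i}(\tempAsgn(a_i))\ge 1/|A|>0$. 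That step is sound (it is exactly the standard argument that in an envy-free partition of a normalized cake every agent receives at least a proportional share) and it is genuinely needed for the division to be valid, so your version is a slightly more careful rendering of the same proof rather than a different one.
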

\iflong 
\begin{proof}
	Since $u_{a_1}(\tempAsgn(a_1)) = |y_1-x_1| \cdot u_{a_1}(e)$ and $u_{a_1}(\tempAsgn(a_2)) = |y_2-x_2| \cdot u_{a_1}(e)$, 
		envy-freeness implies $|y_1-x_1| \ge |y_2-x_2|$. Similarly, $u_{a_2}(\tempAsgn(a_1)) = |y_1-x_1| \cdot u_{a_2}(e)$ and $u_{a_2}(\tempAsgn(a_2)) = |y_2-x_2| \cdot u_{a_2}(e)$, hence envy-freeness implies $|y_1-x_1| \le |y_2-x_2|$.
\end{proof}
\fi 
For each edge $e\in E(G)$, the LP instance will have variables $x^0_e$, $\delta_e$, and $x^1_e$. The variable $\delta_e$ represents the length of each piece $(e, [c,d])$ assigned to any agent that gets a piece fully inside $e$. The variable $x^i_e$ represents the length of the piece of the edge $e$ that was assigned to the agent $a\in A_V$ which is the guess for pair $(e,i)$. \ifshort
We start by adding constraints ensuring that all pieces have non-negative length and that the sum of lengths of pieces on each edge is exactly $1$.
\fi
\iflong
We start by adding the following two constraints saying that all pieces have non-negative length and the sum of lengths of pieces on each edge is exactly $1$.  
\begin{align}
		x^0_e,\delta_e, x^1_e \geq 0 \label{eq:piece_length_positive}\\
	x^0_e+n_e\cdot\delta_e+x^1_e=1.\label{eq:piece_length_sum_to_1}
\end{align}
\fi

\noindent
For each agent $a\in A_V$, let $\agentPieces_a$ denote the set of pieces that $a$ is the guess for: \(\agentPieces_a=\{(e,i)\mid e\in E(G), i\in \{0,1\}, \text{ and } a \text{ is the guess for } (e,i) \}\).
Given the intended meaning of variables $x^0_e$, $\delta_e$, and $x^1_e$, we can now add constraints to guarantee envy-freeness between agents in $a\in A_V$. 
For all $a,a'\in A_V$ we create the constraint
\begin{equation}
	\sum_{(e,i)\in \agentPieces_a}\val_a(e)\cdot x^i_e \ge \sum_{(e',i')\in \agentPieces_{a'}}\val_a(e')\cdot x^{i'}_{e'},\label{eq:vertex-vertex_envy}
\end{equation}
and for every $a\in A_V$ and $e\in E(G)$ we create the constraint  
\begin{equation}
	\sum_{(f,i)\in \agentPieces_a}\val_a(f)\cdot x^i_f \ge \val_a(e)\cdot \delta_e.\label{eq:vertex-edge_envy}
\end{equation}

\noindent
Next, for every (ordered) pair of edges $e,f$ such that $n_e>0$ and $n_f>0$, need to guarantee that agents that get a piece of length $\delta_e$ do not envy agents with pieces fully inside edge $f$. If we knew that $a\in A\setminus A_V$ gets a piece fully inside $e$, then for this agent we could express this via the constraint $\val_a(e)\cdot \delta_e\ge \val_a(f)\cdot \delta_f$.  
Unfortunately, we do not know which agents get a piece fully inside $e$ and cannot obtain this information by exhaustive branching in view of our time bounds.

To overcome this obstacle, let us order the agents in $A\setminus A_V$ by the ratio $\frac{\val_a(e)}{\val_a(f)}$ and consider two agents $a_1, a_2$ such that  
\iflong
$$\frac{\val_{a_1}(e)}{\val_{a_1}(f)}\ge \frac{\val_{a_2}(e)}{\val_{a_2}(f)}.$$
\fi
\ifshort
$\frac{\val_{a_1}(e)}{\val_{a_1}(f)}\ge \frac{\val_{a_2}(e)}{\val_{a_2}(f)}.$
\fi
It is easy to see that $\val_{a_2}(e) \cdot \delta_e\ge \val_{a_2}(f) \cdot \delta_f$ implies $\val_{a_1}(e) \cdot \delta_e \ge \val_{a_1}(f) \cdot \delta_f$. 
Hence to capture the desired constraint it will be sufficient to guess, for each ordered pair of edges $(e,f)$ the agent $a_{(e,f)}$ that is assigned a piece fully inside $e$ (by \(\tempAsgn\)), and has the smallest value for the fraction $\frac{\val_{a_{(e,f)}}(e)}{\val_{a_{(e,f)}}(f)}$ among all the agents that are assigned a piece fully inside $e$.
Intuitively, this corresponds to guessing an envy-critical agent $a$: among all the agents fully assigned to $e$, the agent $a$ is ``closest'' to envying agents fully assigned to $f$. Note that the guessed envy-critical agents will later preclude some agents from receiving a piece of $e$ (in particular, those that precede $a$ in the linear order defined by the fractions).

The procedure described above introduces at most $k^2$ many guesses of agents, which amounts to an additional branching factor of at most $|A\setminus A_V|^{k^2}$.
For each agent $a_{(e,f)}$, we then add the following constraint to the LP instance 
	\begin{equation}
		\val_{a_{(e,f)}}(e) \cdot \delta_e\ge \val_{a_{(e,f)}}(f) \cdot \delta_f.\label{eq:edge-edge_envy}
\end{equation}
We also perform additional consistency checks for this branching. First of all, we discard branches which select the same agent as being envy-critical in multiple pieces (i.e., if $e\neq e'$ then we require $a_{(e,f)} \neq a_{(e',f')}$\iflong , regardless of $f$ and $f'$\fi).
Moreover, since we have guessed at most $k-1$ agents for an edge $e$, we also check that the intended meaning of the choice of the agent $a_{(e,f)}$ is satisfied so far: for all triples of edges $e,f,f'$ we check that 
\begin{equation}
\tag{A}
\frac{\val_{a_{(e,f')}}(e)}{\val_{a_{(e,f')}}(f)}\ge \frac{\val_{a_{(e,f)}}(e)}{\val_{a_{(e,f)}}(f)}. \label{ineq:edge-edge}
\end{equation}

\noindent 
At this point we have added constraints which prevent---assuming our guesses were correct---an agent in $A_V$ from envying any other agent, and agents outside of $A_V$ from envying each other. Finally, for every edge $e$ and every agent $a\in A_V$ we would like to guarantee that the agents that get a piece fully inside $e$ do not envy the agent $a$. Similarly as before, for each specific agent $a'$ that gets a piece fully inside $e$ we could hypothetically ensure this via the constraint 
\iflong
$$ \val_{a'}(e)\cdot \delta_e \ge \sum_{(f,i)\in \agentPieces_a}\val_{a'}(f) \cdot x^i_{f}.$$ 
\fi
\ifshort
$ \val_{a'}(e)\cdot \delta_e \ge \sum_{(f,i)\in \agentPieces_a}\val_{a'}(f) \cdot x^i_{f}.$
\fi
However, we again do not know the agents that are assigned a piece fully inside $e$. Unfortunately, while for two edges $e$ and $f$ it was not too difficult to define and identify envy-critical agents and write linear constraints only for those, when comparing the envy of agents fully assigned to $e$ towards an agent $a\in A_V$ that receives multiple pieces of edges, the notion of ``envy-criticality'' we need depends on the size of the pieces $a$ gets from each edge. In particular, there is no fixed total ordering of the agents that allows us to define envy-criticality.
To give a concrete example of this issue, for two different
instantiations of the $x^i_f$ variables, say $x^i_f:=c^i_f$ and 
$x^i_f:=d^i_f$, and two agents $a_1$ and $a_2$ it may hold that
$$
\frac{\sum_{(f,i)\in \agentPieces_a}\val_{a_1}(f)\cdot c^i_f}{\val_{a_1}(e)}\ge \frac{\sum_{(f,i)\in \agentPieces_a}\val_{a_2}(f)\cdot c^i_f}{\val_{a_2}(e)}, 
$$
but 
$$
\frac{\sum_{(f,i)\in \agentPieces_a}\val_{a_1}(f)\cdot d^i_f}{\val_{a_1}(e)}\le \frac{\sum_{(f,i)\in \agentPieces_a}\val_{a_2}(f)\cdot d^i_f}{\val_{a_2}(e)}. 
$$

On the other hand, the assignment \(\tempAsgn\) does define some specific instantiation of $x^i_e$'s for which there is a (not necessarily strict) total ordering on the agents $a'$ in $A$ capturing how ``close'' they are to envying $a$, i.e., based on the value of
\iflong
$$\frac{\sum_{(f,i)\in \agentPieces_a}\val_{a'}(f)\cdot x^i_f}{\val_{a'}(e)}. $$
\fi
\ifshort
$\frac{\sum_{(f,i)\in \agentPieces_a}\val_{a'}(f)\cdot x^i_f}{\val_{a'}(e)}. $
\fi
While we have no way of computing which total ordering arises from the hypothetical assignment \(\tempAsgn\), we will later (in Lemma~\ref{lem:use-blackbox}) use
\iflong Proposition~\ref{prop:sample_points}
\fi 
\ifshort Theorem 13.22~\cite{BasuPR06} described in the Preliminaries
\fi
to show that only $|A|^{\bigoh(k)}$ many such orderings are possible, and moreover that we can enumerate all of these in time $|A|^{\bigoh(k)}$. For now, let us complete the description of the LP with this in mind. Since the number of relevant orderings is bounded by $|A|^{\bigoh(k)}$, we can apply branching to guess the ordering that arises from a hypothetical targeted assignment.
At that point we can also guess, for each edge $e$ and agent $a\in A_V$, the \emph{envy-critical} (according to this ordering) agent $\alpha_{e,a}$
that is assigned to the edge $e$ and envies the agent $a$ the most---and later use this guess to preclude some agents from being fully assigned to~$e$. 

For each guess, we add constraints to the LP which will ensure that the guess will be consistent with whatever solution the LP produces. In particular, we add the constraint
\begin{equation}
	\val_{a'}(e)\cdot \delta_e \ge \sum_{(f,i)\in \agentPieces_a}\val_{a'}(f) \cdot x^i_f \label{eq:edge-vertex_envy}
\end{equation}
for every agent $a'$ that envies $a$ at most as much as $\alpha_{e,a}$ according to the guessed ordering. More precisely, with each guess we will get some instantiation $x^i_e=y^i_e$ witnessing this guess, and we will insert a copy of Constraint~\ref{eq:edge-vertex_envy} for every agent $a'$ satisfying the following property:

\begin{equation}
\tag{B}
\frac{\sum_{(f,i)\in \agentPieces_a}\val_{\alpha_{e,a}}(f)\cdot y^i_f}{\val_{\alpha_{e,a}}(e)}\ge \frac{\sum_{(f,i)\in \agentPieces_a}\val_{a'}(f)\cdot y^i_f}{\val_{a'}(e)}. \label{ineq:edge-vertex}
\end{equation}

Our last task in this step is to provide a way to perform the branching over total orders described above. Recall that each instantiation of the variables $x_e^i$, for $i\in \{0,1\}$ and $e\in E(G)$, gives rise to a set of total orderings of all agents in $A$. 
In particular, 
each ordering is associated with precisely one pair $(a\in A_V, e\in E(G))$. Here, each ordering captures the relative envy towards $a$ under the assumption that the agents would be assigned to $e$ (see Inequality~\ref{ineq:edge-vertex}). We call a set of such orderings a \emph{portfolio}.
Moreover, since $\tempAsgn$ also corresponds to an instantiation of the variables $x_e^i$, it too gives rise to a set of total orderings, which we call a \emph{portfolio of} $\tempAsgn$. 

\iflong
\begin{lemma}
\fi
\ifshort
\begin{lemma}[$\star$]
\fi
\label{lem:use-blackbox}
It is possible to construct, in time $A^{\bigoh(k)}$, a set $\mathcal{R}$ of at most $k^kA^{\bigoh(k)}$-many portfolios which is guaranteed to contain the portfolio of $\tempAsgn$.
\end{lemma}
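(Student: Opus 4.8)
The plan is to show that the portfolio induced by any instantiation of the piece-length variables is governed entirely by the sign pattern of a bounded collection of linear polynomials, so that Proposition~\ref{prop:sample_points} can enumerate all realizable portfolios. Fix a pair $(a,e)$ with $a\in A_V$ and $e\in E(G)$. For two agents $a_1',a_2'\in A$, the relative order these two occupy in the ordering from Inequality~\ref{ineq:edge-vertex} is decided by comparing
\[
\frac{\sum_{(f,i)\in \agentPieces_a}\val_{a_1'}(f)\cdot x^i_f}{\val_{a_1'}(e)}\quad\text{and}\quad\frac{\sum_{(f,i)\in \agentPieces_a}\val_{a_2'}(f)\cdot x^i_f}{\val_{a_2'}(e)}.
\]
Since all utilities are non-negative, clearing the denominators turns this comparison into the sign of the linear polynomial
\[
P^{a,e}_{a_1',a_2'}(\mathbf{x})=\val_{a_2'}(e)\!\!\sum_{(f,i)\in \agentPieces_a}\!\!\val_{a_1'}(f)\,x^i_f-\val_{a_1'}(e)\!\!\sum_{(f,i)\in \agentPieces_a}\!\!\val_{a_2'}(f)\,x^i_f,
\]
whose variables are among the $2k$ quantities $\{x^i_f : f\in E(G),\ i\in\{0,1\}\}$ and whose coefficients are rationals read off from the utilities. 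The degenerate case $\val_{a_1'}(e)=\val_{a_2'}(e)=0$, where both fractions are undefined, I would treat separately via a fixed tie-breaking convention, noting that such agents can satisfy Constraint~\ref{eq:edge-vertex_envy} only when $a$'s bundle has value $0$ to them.

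Next I collect the set $\mathcal{P}$ of all these polynomials over every pair $(a,e)$ (at most $|A_V|\cdot k\le 2k^2$ choices) and every unordered pair $\{a_1',a_2'\}$ of agents (at most $\binom{|A|}{2}$ choices), so that $|\mathcal{P}|=\bigoh(k^2|A|^2)$ linear polynomials over $d\le 2k$ variables. The crucial point is that a point $\mathbf{x}\in\mathbb{R}^{d}$ determines the entire portfolio it induces: for each pair $(a,e)$ the signs $\sign\bigl(P^{a,e}_{a_1',a_2'}(\mathbf{x})\bigr)$ fix the full (pre)order of all agents, hence exactly one ordering per pair and thus the whole portfolio. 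In particular, two points realizing the same sign vector over $\mathcal{P}$ induce the same portfolio.

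I then invoke Proposition~\ref{prop:sample_points} on $\mathcal{P}$: in $\bigoh(|\mathcal{P}|^{d})$ operations it computes a set of points meeting every non-empty cell of $\mathbb{R}^{d}$ over $\mathcal{P}$. The cell bound gives $\bigoh\bigl((k^2|A|^2)^{2k}\bigr)=k^{\bigoh(k)}|A|^{\bigoh(k)}$ points, computed within that time; for each computed point I evaluate the fractions and sort the agents to read off its portfolio, and let $\mathcal{R}$ be the resulting set. This yields $\mathcal{R}$ of the size and within the time bound claimed in the statement. For correctness, observe that $\tempAsgn$ fixes concrete piece lengths and hence a concrete point $\mathbf{x}^\star\in\mathbb{R}^{d}$, and the portfolio of $\tempAsgn$ is precisely the one $\mathbf{x}^\star$ induces. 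Since $\mathbf{x}^\star$ lies in some cell and our computed set contains a representative $\hat{\mathbf{x}}$ of that same cell, and points of one cell share the sign vector over $\mathcal{P}$, the points $\hat{\mathbf{x}}$ and $\mathbf{x}^\star$ induce identical portfolios; thus the portfolio of $\tempAsgn$ belongs to $\mathcal{R}$.

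The main obstacle is exactly the conceptual reduction isolated above: although no single global ordering of the agents exists (as the instantiation-dependent example preceding the lemma demonstrates), every instantiation's portfolio is nonetheless pinned down by a sign vector of fixed-degree -- here linear -- polynomials, which is what allows the cell machinery of Proposition~\ref{prop:sample_points} to both cap the number of realizable portfolios and enumerate representatives efficiently. The secondary technical hurdle is the careful treatment of vanishing denominators $\val_{a'}(e)$, ensuring every comparison remains a genuine polynomial sign condition rather than an ill-defined ratio.
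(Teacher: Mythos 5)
Your proposal is correct and follows essentially the same route as the paper's proof: define one linear polynomial per pair of agents and per pair $(a,e)$ whose sign decides the comparison in Inequality~\ref{ineq:edge-vertex}, invoke Proposition~\ref{prop:sample_points} to enumerate representatives of all non-empty cells, and observe that the point induced by \(\tempAsgn\) lies in one of these cells and hence shares its sign vector (and portfolio) with a computed representative. Your clearing of denominators and explicit handling of the case $\val_{a'}(e)=0$ is a minor refinement over the paper, which keeps the constant denominators inside the polynomials.
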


\iflong
\begin{proof}
Let $V=\mathbb{R}^{2k}$ be an algebraic set defined by the constant zero polynomial in variables $x^i_e$, for all $e\in E(G)$ and $i\in \{0,1\}$, and let us define the set \(\mathcal{P}\) of $k\cdot|A|^3$ many polynomials in these variables. Namely for each edge $e\in E(G)$, agent $a\in A_V$, and pair of agents $a_1,a_2\in A$, let $P^{(e,a)}_{(a_1,a_2)} =$ 
\[\frac{\sum_{(f,i)\in \agentPieces_a}\val_{a_1}(f)\cdot x^i_f}{\val_{a_1}(e)} - \frac{\sum_{(f,i)\in \agentPieces_a}\val_{a_2}(f)\cdot x^i_f}{\val_{a_2}(e)}. \]

Note that \[\frac{\sum_{(f,i)\in \agentPieces_a}\val_{a_1}(f)\cdot x^i_f}{\val_{a_1}(e)} > \frac{\sum_{(f,i)\in \agentPieces_a}\val_{a_2}(f)\cdot x^i_f}{\val_{a_2}(e)}\] if and only if \(\sign(P^{(e,a)}_{a_1,a_2}(\vec{x})) = 1\), \[\frac{\sum_{(f,i)\in \agentPieces_a}\val_{a_1}(f)\cdot x^i_f}{\val_{a_1}(e)} = \frac{\sum_{(f,i)\in \agentPieces_a}\val_{a_2}(f)\cdot x^i_f}{\val_{a_2}(e)}\] if and only if \(\sign(P^{(e,a)}_{a_1,a_2}(\vec{x})) = 0\), and \(\sign(P^{(e,a)}_{a_1,a_2}(\vec{x})) = -1\) otherwise. The assignment \(\tempAsgn\) gives concrete values for $x^i_e$'s and hence defines some point $w$ in $\mathbb{R}^{2k}$. Depending on the values of the polynomials  \(P^{(e,a)}_{a_1,a_2}\) in $w$, the point $w$ defines a sign vector \(\sigma\). The set of all points that define the same sign vector \(\sigma\) is a cell of \(\mathcal{P}\) over $\mathbb{R}^{2k}$; observe that each sign vector completely describes one possible ordering of the agents w.r.t. these inequalities (or, equivalently, Inequality~\ref{ineq:edge-vertex}).

Proposition~\ref{prop:sample_points} provides an algorithm that computes a point in each cell of \(\mathcal{P}\) over $V$ in time $k^k|A|^{\bigoh(k)}$, which immediately gives the same upper bound on the number of cells. Since $w$ is a point in some cell, one of the points identified by the algorithm is in the same cell as $w$ and defines the same sign vector \(\sigma\). 
\end{proof}
\fi

To formalize the description provided earlier, we now branch over all at most $k^k|A|^{\bigoh(k)}$ many portfolios obtained from Lemma~\ref{lem:use-blackbox}, or equivalently, points in the described metric space. Given some point $y\in \mathbb{R}^{2k}$, we guess for each pair of edge $e\in E(G)$ and agent $a\in A_V$ an agent $a_{e,a}$ as described above and introduce the LP instance constraints described in~(\ref{eq:edge-vertex_envy}). Finally, similarly as after introducing Constraints~(\ref{eq:edge-edge_envy}), we can again check that the intended meaning of guessed agents for each edge $e$ hold by checking Inequalities~(\ref{ineq:edge-edge})~and~(\ref{ineq:edge-vertex}) for every pair of agents 
guessed for each edge $e$. If at least one of the inequalities do not hold, then we reject the branch.
This finishes the construction of the LP instance. 

\smallskip
\noindent \textbf{Bipartite Matching.}\quad
Now, we can solve each LP instance in at most cubic time~\cite{CohenLS19}. If the instance is unsatisfiable, then the algorithm rejects this branch and continues to the next one. Else, given an LP solution $x$, we can assign the values for the agents that we already guessed. Namely for each agent $a\in A_V$, we let 
\[\pi(a)={\kern -5pt} \bigcup_{(e,0)\in \agentPieces_a} {\kern -5pt} \{(e,[0,x^0_e])\}\cup {\kern -5pt} \bigcup_{(e,1)\in \agentPieces_a} {\kern -5pt} \{(e,[1-x^1_e,1])\}.\] 
Every other agent $a$ that we identified via a guess was fully assigned to some particular edge $e$. Moreover, \iflong by Constraint~\ref{eq:piece_length_sum_to_1}, \fi we can split the interval $[x^0_e,1-x^1_e]$ into $n_e$ pieces of length $\delta_e$; note that if $n_e>0$ then $\delta_e$ cannot be equal to $0$. Let $I_a\subseteq [x^0_e,x^1_e]$ be any of the pieces that has not been assigned to another agent yet and let $\pi(a)=(e,I_a)$. 

Finally, we are left with some unassigned agents and some unassigned pieces of the graph, each consisting of a single piece of an edge. Since, $|A_V|+\sum_{e\in E(G)} n_e=|A|$, the number of unsigned pieces equals the number of unassigned agents. Moreover, since at this point we have a concrete partition, for every pair of agent $a$ and piece $(e,I)$ we can in polynomial time check whether $a$ would envy a piece in the partition or not (since this check can be performed without knowing the assignments of the other agents); in the former case we say that $a$ is \emph{compatible} with $(e,I)$, and otherwise we say that they are \emph{incompatible}. 
We can thus create an auxiliary bipartite graph $H=(X\uplus Y, F)$ such that each vertex in $X$ is identified with an unassigned agent, each vertex in $Y$ is identified with an unassigned piece, and there is an edge between an agent $a\in X$ and a piece $(e,I)\in Y$ if and only if they are compatible. We compute a maximum matching $M$ in $H$ \iflong using, e.g., the Hopcroft-Karp algorithm \fi in time at most $\bigoh(|A|^3)$. If $M$ is not a perfect matching, then we reject the branch of our algorithm and try another branch. Else for each unassigned agent $a$, we let \(\pi(a)=M(a)\), where $M(a)$ denotes the piece $(e,I)\in Y$ matched with the agent $a\in X$ by the matching $M$. In this case the algorithm outputs \textsf{Yes} and optionally also the assignment $\pi$ as a witness. If none of the branches lead to a positive outcome, the algorithm outputs \textsf{No}. 

\ifshort
This concludes the description of the algorithm. It now remains to prove correctness and verify the running time. ($\star$)
\fi

\iflong
This concludes the description of the algorithm, and we now proceed to establishing its correctness.

\begin{lemma}\label{lem:No_edges_sufficient_condition}
If the above algorithm outputs an assignment $\pi$, then the assignment $\pi$ is envy-free.
\end{lemma}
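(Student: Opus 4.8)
The plan is to verify envy-freeness directly from the definition: fixing the assignment $\pi$ produced by a successful branch (one whose LP is feasible with solution $x$ and whose bipartite matching $M$ is perfect), I would show that for every ordered pair of agents $a,a'$ we have $\val_a(\pi(a)) \ge \val_a(\pi(a'))$. The argument splits according to how $a$ was assigned. Recall that the LP solution fixes the length of every piece, so the whole partition is determined; in particular every agent assigned fully inside an edge $e$ receives a piece of length $\delta_e$ and hence value $\val_a(e)\cdot\delta_e$, while each $a\in A_V$ receives value $\sum_{(e,i)\in \agentPieces_a}\val_a(e)\cdot x^i_e$. Since every piece in the partition is either a piece of some $a\in A_V$ or a length-$\delta_e$ piece fully inside an edge $e$, it suffices to rule out envy toward these two piece types in each case.

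The two easy cases are the agents whose non-envy is guaranteed ``syntactically''. If $a$ was assigned by the matching $M$, then the edge $\{a,\pi(a)\}$ of $H$ exists only if $a$ is compatible with $\pi(a)$, and compatibility means precisely that $a$, when holding $\pi(a)$, envies no piece of the already fully determined partition; hence $a$ envies no $a'$. If $a\in A_V$, then non-envy toward another $A_V$ agent is exactly Constraint~(\ref{eq:vertex-vertex_envy}), while non-envy toward any agent fully inside an edge $e$ (whose piece has value $\val_a(e)\cdot\delta_e$) is Constraint~(\ref{eq:vertex-edge_envy}); so $a$ envies nobody.

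The crux is the remaining case, where $a$ is one of the explicitly guessed \emph{envy-critical} agents assigned fully inside some edge $e$ (some $a_{(e,f)}$ or $\alpha_{e,a''}$). Here the LP contains no constraint written for $a$ against every possible target; it only constrains the designated representative for each target, and I would transfer non-envy from the representative to $a$ using the consistency inequalities~(\ref{ineq:edge-edge}) and~(\ref{ineq:edge-vertex}) together with the monotonicity noted in the text. Concretely: against another agent fully inside the \emph{same} edge $e$, both pieces have value $\val_a(e)\cdot\delta_e$ by construction of $\pi$, so there is no envy; against an agent fully inside a \emph{different} edge $f$, Constraint~(\ref{eq:edge-edge_envy}) gives $\val_{a_{(e,f)}}(e)\delta_e \ge \val_{a_{(e,f)}}(f)\delta_f$ for the representative, and since Inequality~(\ref{ineq:edge-edge}) guarantees $\frac{\val_a(e)}{\val_a(f)}\ge \frac{\val_{a_{(e,f)}}(e)}{\val_{a_{(e,f)}}(f)}$, the monotonicity yields $\val_a(e)\delta_e \ge \val_a(f)\delta_f$; and against an agent $a''\in A_V$, Inequality~(\ref{ineq:edge-vertex}) certifies that $a$ envies $a''$ at most as much as the designated $\alpha_{e,a''}$ does, so the copy of Constraint~(\ref{eq:edge-vertex_envy}) inserted for exactly those agents applies to $a$ and gives $\val_a(e)\delta_e \ge \sum_{(f,i)\in \agentPieces_{a''}}\val_a(f)\cdot x^i_f$.

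Assembling the cases shows that no agent envies another, so $\pi$ is envy-free. I expect the main obstacle to be the bookkeeping in this third case: one must argue carefully that the relevant LP constraint or consistency inequality is genuinely present for the particular guessed agent $a$ under consideration, which hinges on the fact that the consistency checks range over \emph{all} agents guessed for edge $e$ so that every explicitly assigned agent of $e$ is covered. A secondary subtlety is that the monotonicity transfer must remain valid when some utilities vanish, which is why I would phrase it throughout via the cross-multiplied inequalities $\val_a(e)\delta_e\ge \val_a(f)\delta_f$ rather than the ratios themselves wherever a denominator might be zero.
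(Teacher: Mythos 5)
Your proposal is correct and follows essentially the same route as the paper's proof: values of pieces are read off from the LP solution, agents in $A_V$ are handled by Constraints~(\ref{eq:vertex-vertex_envy}) and~(\ref{eq:vertex-edge_envy}), matched agents by the compatibility definition, and the guessed edge-internal agents by combining Constraints~(\ref{eq:edge-edge_envy}) and~(\ref{eq:edge-vertex_envy}) with the consistency checks~(\ref{ineq:edge-edge}) and~(\ref{ineq:edge-vertex}). Your third case is merely spelled out in more detail (including the cross-multiplied form to avoid zero denominators) than the paper's one-sentence treatment.
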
  
\begin{proof}
	For the ease of presentation, we assume that for $i\in \{0,1\}$ and $e\in E(G)$,  $x^i_e$ and $\delta_e$ are the values of the same-name variables that are output of the LP solver in the branch that leads to the assignment $\pi$.  
	By the construction of the assignment $\pi$ we have that for every agent $a\in A_V$ \[\val_a(\pi(a)) = \sum_{(e,i)\in \agentPieces_a}\val_a(e)\cdot x^i_e,\] and every other agent $a\in A\setminus A_V$ gets a piece that is fully inside some edge $e$ and has length $\delta_e$, hence \[\val_a(\pi_a) = \val_a(e)\cdot\delta_e.\] The fact that for every $(e,i)$, $e\in E(G)$ and $i\in \{0,1\}$, we guessed an agent in $A_V$, that $|A_V|+\sum_{e\in E(G)} n_e=|A|$, and that all constraints of type (\ref{eq:piece_length_sum_to_1}) are satisfied implies that \(\pi(A)\) is indeed a partition of $G$. Constraints of type (\ref{eq:vertex-vertex_envy}) and (\ref{eq:vertex-edge_envy}) guarantee that the agents in $A_V$ do not envy any other agent. Moreover, because for every edge and a pair of agents assigned to that edge in our guessing part we check that Inequalities~(\ref{ineq:edge-edge})~and~(\ref{ineq:edge-vertex}) and because Constraints (\ref{eq:edge-edge_envy}) and (\ref{eq:edge-vertex_envy}) are satisfied, we get that all agents (partially) assigned in our branching phase do not envy any other agent. Finally, all remaining agents are assigned by the matching algorithm, which pairs them only with pieces they value the most and hence none of the agents assigned by the matching algorithm envies any other agent. 
\end{proof}
\begin{lemma}\label{lem:No_edges_necessary_condition}
	If \(\mathcal{I}\) admits an envy-free assignment $\tempAsgn$, then the above algorithm will output an envy-free assignment. 
\end{lemma}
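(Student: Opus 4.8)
The plan is to prove completeness by showing that if an envy-free assignment $\tempAsgn$ exists, then at least one branch of the algorithm reconstructs a valid envy-free assignment (possibly not $\tempAsgn$ itself, but one certified by the LP and matching). The core idea is that all the branching steps are \emph{exhaustive guesses} of features of $\tempAsgn$, so one branch matches $\tempAsgn$ on every guessed quantity; I then argue that on this ``correct'' branch, the LP is feasible and the matching is perfect.

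First I would fix the branch that agrees with $\tempAsgn$ on every guess: (i) for each pair $(e,i)$ the agent in $A_V$ that $\tempAsgn$ assigns to the boundary piece of $e$; (ii) the counts $n_e$ of agents fully inside each edge; (iii) for each ordered edge pair $(e,f)$ the envy-critical agent $a_{(e,f)}$ minimizing $\frac{\val_a(e)}{\val_a(f)}$ among agents fully inside $e$; and (iv) the portfolio of $\tempAsgn$, which by Lemma~\ref{lem:use-blackbox} is among the enumerated portfolios, together with the corresponding envy-critical agents $\alpha_{e,a}$. Since each step ranges over all possibilities, such a branch exists and passes every sanity/consistency check, because those checks (connectivity, the $\sum n_e$ identity, vertex-disjointness, and Inequalities~(\ref{ineq:edge-edge}),~(\ref{ineq:edge-vertex})) all hold for the genuine $\tempAsgn$.

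Next I would verify LP feasibility on this branch by exhibiting a feasible point: set $x^i_e$ and $\delta_e$ to the actual piece-lengths arising from $\tempAsgn$ (using Observation~\ref{obs:same_length_pieces_on_edge} to justify that a single $\delta_e$ suffices). Constraints~(\ref{eq:piece_length_positive}) and~(\ref{eq:piece_length_sum_to_1}) hold since these are genuine lengths summing to $1$ per edge. Constraints~(\ref{eq:vertex-vertex_envy}) and~(\ref{eq:vertex-edge_envy}) hold because $\tempAsgn$ is envy-free. For the edge-edge Constraint~(\ref{eq:edge-edge_envy}), the chosen $a_{(e,f)}$ has the smallest ratio, so its envy-freeness toward $f$ propagates to all agents fully inside $e$; and for Constraint~(\ref{eq:edge-vertex_envy}), the portfolio ordering is exactly the one induced by $\tempAsgn$, so $\alpha_{e,a}$ being the envy-critical agent means every $a'$ satisfying property~(\ref{ineq:edge-vertex}) also satisfies the envy constraint against $a$ under $\tempAsgn$. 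Thus the LP admits at least one feasible point, so the solver returns some solution $x$.

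Finally I would handle the matching step, which is where I expect the main obstacle. The LP solution $x$ need not equal the $\tempAsgn$-values, so the actual assignment of \emph{which} unassigned agent gets \emph{which} leftover piece can differ from $\tempAsgn$; I therefore cannot simply copy $\tempAsgn$'s matching. Instead I must argue that the bipartite compatibility graph $H$ admits a perfect matching. The key observation is that by Observation~\ref{obs:same_length_pieces_on_edge} all pieces inside a given edge $e$ have identical length $\delta_e$, hence identical value to every agent; so an agent's compatibility with a leftover piece depends only on the \emph{edge} containing it, not on the specific subinterval. Using this I would show that the number of agents that $\tempAsgn$ placed fully inside any edge set equals the number of available pieces there, and that these agents are compatible with those pieces under the LP solution precisely because the guessed envy-critical agents forced the LP to make each such piece envy-free for exactly the agents that could receive it. Formally I would invoke Hall's condition: for any set of unassigned agents, the pieces compatible with them include all pieces on the edges to which $\tempAsgn$ assigned those agents, whose total count is at least the number of agents. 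Establishing this Hall-type inequality rigorously, reconciling the LP-produced lengths with $\tempAsgn$'s combinatorial structure via the envy-critical guesses, is the delicate part; once it is in place, the perfect matching yields an assignment $\pi$ that is envy-free by Lemma~\ref{lem:No_edges_sufficient_condition}, completing the proof.
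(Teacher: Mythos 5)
Your proposal is correct and follows essentially the same route as the paper: fix the branch agreeing with $\tempAsgn$ on every guess, certify LP feasibility by plugging in the actual lengths from $\tempAsgn$, and then argue that any LP solution admits a perfect matching because the envy-critical guesses $a_{(e,f)}$ and $\alpha_{e,a}$ force the LP to contain the relevant envy constraints for \emph{every} agent that $\tempAsgn$ placed fully inside $e$. The only minor difference is that Hall's condition is unnecessary: since each such agent is compatible with all $n_e$ pieces of its own edge and the counts match exactly, one can assign agents to pieces of their $\tempAsgn$-edge directly.
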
 
\begin{proof}
	We will first describe the branch of our algorithm that corresponds to the assignment $\tempAsgn$, explain why we do not reject this branch and show that the LP we obtain in this branch has a solution. Afterwards, we show that any solution to this LP leads to an envy-free assignment.  
	
	Given $\tempAsgn$, we can find the set of agents $A_V$ that get the endpoint pieces of each edge as well as for each agent $a\in A_V$ we know precisely which endpoint the agents gets. For agent $a\in A_V$ we denote by $\agentPieces_a$, in the same way as in the description of the algorithm, the set of pairs $(e,i)$ such that if $i=0$, then $\tempAsgn(a)$ contains piece $(e,[0,c])$, for some $c\in [0,1]$, and if $i=1$, then $\tempAsgn(a)$ contains piece $(e,[d,1])$, for some $d\in [0,1]$. Note that in one of the branches of our algorithm, we get precisely same set of agents $A_V$ with precisely same $\agentPieces_a$ for each agent $a\in A_V$. Moreover, each of agents induces a connected subset of $G$, hence we did not rejected the branch for connectivity reasons. Given this pieces, we can already assign the values for $x^i_e$'s in the LP instance. If $(e,0)\in \agentPieces_a$ and $(e,[0,c])$ is a piece in $\tempAsgn(a)$ such that $c\neq 1$, then we let $x^0_e=c$, otherwise if $c=1$, then we let $x^0_e=\frac{1}{2}$. Similarly, if $(e,1)\in \agentPieces_a$ and $(e,[c,1])$ is a piece in $\tempAsgn(a)$ such that $c\neq 0$, then we let $x^1_e=1-c$, otherwise if $c=0$, then we let $x^1_e=\frac{1}{2}$. Note that \[\val_a(\tempAsgn(a))=\sum_{(e,i)\in \agentPieces_a}\val_a(e)\cdot x^i_e.\] 
	Now, for each edge $e\in E(G)$, we let $n_e$ be the number of agents that, according to \(\tempAsgn\), get a piece fully inside the edge $e$ that does not contain any endpoint of $e$. By Observation~\ref{obs:same_length_pieces_on_edge}, all such pieces have the same length and we let $\delta_e$ denote the length of the interval in such piece. For technical reasons, if $n_e=0$, we let $\delta_e=0$. For any agent $a$ that gets a piece fully inside the edge $e$ we get \[\val_a(\tempAsgn(a)) = \delta_e\cdot \val_a(e).\]
	Observe that in one of our branches we guessed precisely same
        numbers of agents per edge and $|A_V|+\sum_{e\in E(G)}
        n_e=|A|$, so we did not reject this branch. Moreover, since
        \(\tempAsgn\) is envy-free assignment and every agent values
        the whole graph $1$, then each agent gets a non-trivial
        piece. Hence, the given assignment of $x^i_e$'s and
        $\delta_e$'s satisfy all constraints of type
        (\ref{eq:piece_length_positive}). Since \(\tempAsgn(A)\) is a
        partition of $G$, we get also that the constraint
        (\ref{eq:piece_length_sum_to_1}) is satisfied for every edge
        $e$. Since \(\tempAsgn(A)\) is envy-free, we get that all
        constraints of types (\ref{eq:vertex-vertex_envy}) and
        (\ref{eq:vertex-edge_envy}) are satisfied. Now for an ordered
        pair of edges $(e,f)$, if $n_e\ge 1$, then there is an agent
        $a_{(e,f)}$ that minimizes the value of the fraction
        $\frac{\val_{a_{(e,f)}}(e)}{\val_{a_{(e,f)}}(f)}$ among all
        the agents assigned a piece fully inside $e$. Note that this
        agent might not be unique as several agents might have the
        same value of the fraction, however in one of the branches we
        guessed the agents $a_{(e,f)}$ among the ones that minimizes
        the fraction. Since $\tempAsgn$ is envy-free, all constraints
        of type (\ref{eq:edge-edge_envy}) are satisfied. Moreover, in
        a correct branch we also satisfy all inequalities of type
        (\ref{ineq:edge-edge}), so we do not reject this branch
        because of some inequality of type
        (\ref{ineq:edge-edge}). Finally, the $x^i_e$'s define a point
        in $\mathbb{R}^{2k}$ that belong to some cell of
        \(\mathcal{P}=\bigcup_{e\in E(G), a\in A_V, a_1,a_2\in
          A}\{P^{(e,a)}_{a_1,a_2}\}\) over $\mathbb{R}^{2k}$. The
        algorithm branches over all cells of \(\mathcal{P}\) that
        contain a point and hence we branch on some point defined by
        $y^i_e$'s from the same cell as $x^i_e$. It follows that for
        each agent $a\in A_V$, every edge $e\in E(G)$ and every pair
        of agents $a_1,a_2$ we get \[\frac{\sum_{(f,i)\in
              \agentPieces_a}\val_{a_{1}}(f)\cdot
            x^i_f}{\val_{a_{1}}(e)}\ge \frac{\sum_{(f,i)\in
              \agentPieces_a}\val_{a_2}(f)\cdot
            x^i_f}{\val_{a_2}(e)}\] if and only if 
	\[\frac{\sum_{(f,i)\in \agentPieces_a}\val_{a_{1}}(f)\cdot y^i_f}{\val_{a_{1}}(e)}\ge \frac{\sum_{(f,i)\in \agentPieces_a}\val_{a_2}(f)\cdot y^i_f}{\val_{a_2}(e)}.\] 
	Now, for each edge $e\in E(G)$ such that $n_e\ge 1$ and each agent $a\in A_V$, there is an agent $a_{e,a}$ that maximizes the fraction 
	\[\frac{\sum_{(f,i)\in \agentPieces_a}\val_{a_{e,a}}(f)\cdot x^i_f}{\val_{a_{e,a}}(e)}\] 
among all the agents that are assigned a piece fully inside $e$. Since agent $a_{e,a}$ is assigned a piece fully inside $e$ by $\tempAsgn$, we get that 
	\[	\val_{a_{e,a}}(e)\cdot \delta_e \ge \sum_{(f,i)\in \agentPieces_a}\val_{a_{e,a}}(f) \cdot x^i_f.\] 
	
It follows that in the branch where we guessed $a_{e,a}$ correctly all agents that satisfy the inequality (\ref{ineq:edge-vertex}) also satisfy the constraint (\ref{eq:edge-vertex_envy}) and all constraints of type (\ref{eq:edge-vertex_envy}) are satisfied. Moreover, in the correct branch, we do not reject the branch because of inequalities of types (\ref{ineq:edge-edge}) and (\ref{ineq:edge-vertex}). It follows that the branch of our algorithm described above is not rejected and the LP we obtain in branch has a solution.

	Now let us take any solution to the LP obtained in the branch described above and define the assignment \(\pi\) as in our algorithm. It remains 
	to show that the matching instance admits a perfect matching. Since, it is a solution, it is easy to see that under this assignment the agents 
	in $A_V$ do not envy any other agent. Similarly, all the agents that we guessed in our branching do not envy any other piece. Note that all 
	agents assigned the piece so far get the same "kind" of a piece as they get by \(\tempAsgn\). It remains to show that if agent $a\in A$ gets a 
	piece fully inside an edge $e\in E(G)$ according to \(\tempAsgn\), then the auxiliary bipartite graph $H=(X\uplus Y, F)$ that our algorithm 
	constructs contains an edge between $a$ and all pieces fully inside $e$. That is we need to check that assigning the agent $a$ to the piece 
	fully inside $e$ does not cause envy. First note for every $a'\in A_V$, the agent $a$ satisfies the inequality 
	\[\frac{\sum_{(f,i)\in 
	\agentPieces_{a'}}\val_{a_{e,a'}}(f)\cdot y^i_f}{\val_{a_{e,a'}}(e)}\ge \frac{\sum_{(f,i)\in \agentPieces_{a'}}\val_{a}(f)\cdot 
y^i_f}{\val_{a}(e)},\] 

by the choice of the agent $a_{e,a'}$ and hence the LP instance contains the constraint 
\[\val_{a}(e)\cdot \delta_e \ge \sum_{(f,i)\in \agentPieces_{a'}}\val_{a}(f) \cdot x^i_f, \] 
therefore, $a$ does not envy any piece of an agent in $A_V$. Now, for an edge $f$, we know 
that 
\[\frac{\val_{a}(e)}{\val_{a}(f)}\ge \frac{\val_{a_{(e,f)}}(e)}{\val_{a_{(e,f)}}(f)},\] 
by the choice of the agent $a_{(e,f)}$, and that the LP instance contains a constraint 
\[\val_{a_{(e,f)}}(e) \cdot \delta_e \ge \val_{a_{(e,f)}}(f) \cdot \delta_f.\] 
Therefore, 
\[\val_{a}(e) \cdot \delta_e\ge \val_{a}(f) \cdot \delta_f\] 
and $a$ would not envy a piece inside $f$. It follows that an agent $a$ that is assigned a piece fully inside $e$ by \(\tempAsgn\) can be assigned a piece fully inside \(e\) by \(\pi\) as well and \(H\) admits a perfect matching.
\end{proof}
	\begin{lemma}\label{lem:No_edges_running_time}
		The runtime of the above algorithm is $|A|^{\bigoh(k^2)}$.
	\end{lemma}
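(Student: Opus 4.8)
The plan is to bound the total running time as the product of the number of branches explored and the (polynomial) amount of work done within each branch, so I would first collect every branching factor introduced in the algorithm and multiply them. The initial guessing of the endpoint agents $A_V$ contributes $|A|^{2k}$; guessing the counts $n_e$ contributes $(|A|+1)^k$; guessing the edge-edge envy-critical agents $a_{(e,f)}$ over all $k^2$ ordered pairs of edges contributes at most $|A|^{k^2}$; branching over the portfolios produced by Lemma~\ref{lem:use-blackbox} contributes $k^k|A|^{\bigoh(k)}$; and guessing the edge-vertex envy-critical agents $\alpha_{e,a}$ over all at most $2k^2$ pairs $(e,a)$ with $e\in E(G)$ and $a\in A_V$ contributes at most $|A|^{2k^2}$. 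The product of these five factors is the total number of branches the algorithm explores.

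Next I would simplify this product to $|A|^{\bigoh(k^2)}$. Collecting the exponents of $|A|$ coming from the pure powers $|A|^{2k}$, $|A|^{k^2}$, $|A|^{\bigoh(k)}$ and $|A|^{2k^2}$ already yields $|A|^{3k^2+\bigoh(k)}=|A|^{\bigoh(k^2)}$, so the only terms needing care are $(|A|+1)^k$ and $k^k$. Writing $(|A|+1)^k\le (2|A|)^k=2^k|A|^k$ disposes of the first and leaves the stray factor $2^k k^k$. For this I would observe that the case $|A|=1$ is trivial (a lone agent never envies), so we may assume $|A|\ge 2$; then $2^k k^k\le (2k)^k=2^{\bigoh(k\log k)}\le 2^{\bigoh(k^2)}\le |A|^{\bigoh(k^2)}$, using $k\log k\le k^2$. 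Hence the number of branches is $|A|^{\bigoh(k^2)}$.

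Finally I would bound the work performed inside each branch, all of which is polynomial in $|\mathcal{I}|$. The sanity and consistency checks (connectivity, the vertex-disjointness test, and Inequalities~(\ref{ineq:edge-edge})~and~(\ref{ineq:edge-vertex})) are clearly polynomial; the LP has only the $3k$ variables $x^0_e,\delta_e,x^1_e$ together with a polynomial number of constraints and can be solved in essentially cubic time by~\cite{CohenLS19}; and the final maximum-matching computation on $H$ runs in $\bigoh(|A|^3)$. The one global subroutine that is not charged per branch, namely the construction of the portfolio set $\mathcal{R}$, already runs in $|A|^{\bigoh(k)}$ time by Lemma~\ref{lem:use-blackbox} and is thus dominated. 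Multiplying the $|A|^{\bigoh(k^2)}$ branches by the polynomial per-branch cost yields the claimed bound $|A|^{\bigoh(k^2)}$, matching Theorem~\ref{thm:No_edges_main}. I expect no genuine obstacle here, only the bookkeeping subtlety of absorbing the $k^k$ and $(|A|+1)^k$ factors into $|A|^{\bigoh(k^2)}$ and of confirming that each guessing phase contributes at most the stated exponent in $k$.
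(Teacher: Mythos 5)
Your proof is correct and follows essentially the same route as the paper's: multiply the branching factors $|A|^{2k}\cdot(|A|+1)^k\cdot|A|^{k^2}\cdot|A|^{\bigoh(k)}\cdot|A|^{2k^2}$ and observe that the work per branch (LP construction and solving, consistency checks, matching) plus the one-time cell enumeration is polynomial, respectively $|A|^{\bigoh(k)}$. Your extra care in absorbing the $k^k$ and $(|A|+1)^k$ factors (via the $|A|\geq 2$ assumption) is a minor refinement of bookkeeping the paper leaves implicit, not a different argument.
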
 
	\begin{proof}
		All operations such as adding a constraints in LP instance, checking the consistency of the instance, solving LP, creating the auxiliary graph $H$ and finding maximum size matching in $H$ can be done in polynomial time. Moreover, we can enumerate points in all cells of \(\mathcal{P}\) over $\mathbb{R}^{2k}$ in \(|A|^{\bigoh(k)}\) time. Hence, to get running time $|A|^{\bigoh(k^2)}$, it suffices to upper bound the number of branches by $|A|^{\bigoh(k^2)}$. We first branch for each endpoint of an edge for the agent that gets this endpoints. This is $|A|^{2k}$ many branches. For each branch we then branch for each edge on the number of agents that are assigned a piece fully inside the edge. This is $(|A|+1)^k$ many branches. Afterwards, we branch for each pair of edges $e,f$ on the agent $a_{(e,f)}$. This is $|A|^{k^2}$ many branches. In each branch we run the algorithm from Proposition~\ref{prop:sample_points} and branch on $|A|^{\bigoh(k)}$ many points in $\mathbb{R}^{2k}$. This is $|A|^{\bigoh(k)}$ many branches. Finally in each branch, we branch on agents $a_{e,a}$, this is at most $|A|^{2k^2}$ many branches (as $|A_V|\le 2k$). Therefore, the total number of branches is \[|A|^{2k}\cdot (|A|+1)^k \cdot |A|^{k^2}\cdot  |A|^{\bigoh(k)}\cdot|A|^{2k^2} = |A|^{\bigoh(k^2)}. \qedhere \]
	\end{proof}
	The proof of Theorem~\ref{thm:No_edges_main} follows immediately from Lemmas~\ref{lem:No_edges_sufficient_condition}-\ref{lem:No_edges_running_time}.
	\fi

\section{Concluding Remarks}
\label{sec:discussion}

Our results provide a significantly improved understanding of the classical complexity of envy-free graph cutting. One direction that may be of interest for future work is to analyze the complexity of this problem through the more refined \emph{parameterized complexity} paradigm. Indeed, in that setting the algorithmic results presented here can be viewed as \XP\ algorithms.
An immediate question in this context is whether these results can be strengthened to show fixed-parameter tractability. Most prominently, is there a fixed-parameter algorithm for \problemshort\ parameterized by the number of edges?

For the special case where the underlying graph is a path, the complexity of the problem is an even more intriguing question.  As \problemshort\ on a path is a special case of \EF\  \contigcake, we know that  it always admits a solution and hence the decision version of the problem cannot be \W[1]-hard ({\em for any parameterization}). 
If the problem of computing an envy-free solution does not admit a fixed-parameter algorithm, would showing this require a variation of the \W-hierarchy tailored specifically to TFNP problems\iflong~\cite{MegiddoP91}\fi?

\newpage

\section*{Acknowledgements}
Ganian and Hamm acknowledge support from the Austrian Science Fund (FWF, Projects P31336 and Y1329). Hamm also acknowledges support from FWF (Project W1255-N23). Ordyniak acknowledges support by the EPSRC (EP/V00252X/1).

\bibliographystyle{named}
\bibliography{references-short}

\begin{thebibliography}{}

\bibitem[\protect\citeauthoryear{Abebe \bgroup \em et al.\egroup
  }{2017}]{AbebeKP17}
Rediet Abebe, Jon Kleinberg, and David~C. Parkes.
\newblock Fair division via social comparison.
\newblock In {\em AAMAS}, page 281–289, 2017.

\bibitem[\protect\citeauthoryear{Alman and Williams}{2021}]{Alman2021}
Josh Alman and Virginia~Vassilevska Williams.
\newblock A refined laser method and faster matrix multiplication.
\newblock In {\em Proceedings of the 2021 {ACM}-{SIAM} Symposium on Discrete
  Algorithms ({SODA})}, pages 522--539, 2021.

\bibitem[\protect\citeauthoryear{Arunachaleswaran \bgroup \em et al.\egroup
  }{2019}]{arunachaleswaran2019fair}
Eshwar~Ram Arunachaleswaran, Siddharth Barman, Rachitesh Kumar, and Nidhi
  Rathi.
\newblock Fair and efficient cake division with connected pieces.
\newblock In {\em WINE}, pages 57--70, 2019.

\bibitem[\protect\citeauthoryear{Aumann and Dombb}{2015}]{aumann2015efficiency}
Yonatan Aumann and Yair Dombb.
\newblock The efficiency of fair division with connected pieces.
\newblock {\em ACM TEAC}, 3(4):1--16, 2015.

\bibitem[\protect\citeauthoryear{Aziz \bgroup \em et al.\egroup
  }{2018}]{aziz2018knowledge}
Haris Aziz, Sylvain Bouveret, Ioannis Caragiannis, Ira Giagkousi, and
  J{\'e}r{\^o}me Lang.
\newblock Knowledge, fairness, and social constraints.
\newblock In {\em AAAI}, number~1, 2018.

\bibitem[\protect\citeauthoryear{Balkanski \bgroup \em et al.\egroup
  }{2014}]{balkanski2014simultaneous}
E.~Balkanski, S.~Br{\^{a}}nzei, D.~Kurokawa, and A.~Procaccia.
\newblock Simultaneous cake cutting.
\newblock In {\em {AAAI}}, pages 566--572, 2014.

\bibitem[\protect\citeauthoryear{Basu \bgroup \em et al.\egroup
  }{2006}]{BasuPR06}
Saugata Basu, Richard Pollack, and Marie-Fran\c{c}oise Roy.
\newblock {\em Algorithms in Real Algebraic Geometry}.
\newblock Springer-Verlag, 2006.

\bibitem[\protect\citeauthoryear{Bei and Suksompong}{2021}]{bei2021dividing}
Xiaohui Bei and Warut Suksompong.
\newblock Dividing a graphical cake.
\newblock In {\em {AAAI}}, pages 5159--5166, 2021.

\bibitem[\protect\citeauthoryear{Bei \bgroup \em et al.\egroup
  }{2017}]{BeiQZ17}
Xiaohui Bei, Youming Qiao, and Shengyu Zhang.
\newblock Networked fairness in cake cutting.
\newblock In {\em {IJCAI}}, pages 3632--3638, 2017.

\bibitem[\protect\citeauthoryear{Bei \bgroup \em et al.\egroup
  }{2019}]{bei2019connected}
Xiaohui Bei, Ayumi Igarashi, Xinhang Lu, and Warut Suksompong.
\newblock Connected fair allocation of indivisible goods.
\newblock {\em arXiv preprint arXiv:1908.05433}, 2019.

\bibitem[\protect\citeauthoryear{Bei \bgroup \em et al.\egroup
  }{2020}]{DBLP:conf/soda/BeiSWZZZ20}
Xiaohui Bei, Xiaoming Sun, Hao Wu, Jialin Zhang, Zhijie Zhang, and Wei Zi.
\newblock Cake cutting on graphs: {A} discrete and bounded proportional
  protocol.
\newblock In {\em {SODA}}, pages 2114--2123, 2020.

\bibitem[\protect\citeauthoryear{Bei \bgroup \em et al.\egroup
  }{2021}]{bei2021price}
Xiaohui Bei, Ayumi Igarashi, Xinhang Lu, and Warut Suksompong.
\newblock The price of connectivity in fair division.
\newblock In {\em {AAAI}}, pages 5151--5158, 2021.

\bibitem[\protect\citeauthoryear{Bil{\`o} \bgroup \em et al.\egroup
  }{2021}]{bilo2021almost}
Vittorio Bil{\`o}, Ioannis Caragiannis, Michele Flammini, Ayumi Igarashi,
  Gianpiero Monaco, Dominik Peters, Cosimo Vinci, and William~S Zwicker.
\newblock Almost envy-free allocations with connected bundles.
\newblock {\em Games and Economic Behavior}, 2021.

\bibitem[\protect\citeauthoryear{Bouveret \bgroup \em et al.\egroup
  }{2017}]{BouveretCEIP17}
Sylvain Bouveret, Katar{\'{\i}}na Cechl{\'{a}}rov{\'{a}}, Edith Elkind, Ayumi
  Igarashi, and Dominik Peters.
\newblock Fair division of a graph.
\newblock In {\em {IJCAI}}, pages 135--141, 2017.

\bibitem[\protect\citeauthoryear{Brams and Taylor}{1996}]{brams1996fair}
Steven~J Brams and Alan~D Taylor.
\newblock {\em Fair Division: From cake-cutting to dispute resolution}.
\newblock Cambridge University Press, 1996.

\bibitem[\protect\citeauthoryear{Br{\^{a}}nzei and
  Miltersen}{2015}]{branzei2015dictatorship}
Simina Br{\^{a}}nzei and Peter~Bro Miltersen.
\newblock A dictatorship theorem for cake cutting.
\newblock In {\em {IJCAI}}, pages 482--488, 2015.

\bibitem[\protect\citeauthoryear{Br{\^a}nzei and
  Nisan}{2017}]{branzei2017communication}
Simina Br{\^a}nzei and Noam Nisan.
\newblock Communication complexity of cake cutting.
\newblock {\em arXiv preprint arXiv:1709.09876}, 2017.

\bibitem[\protect\citeauthoryear{Br{\^{a}}nzei and
  Nisan}{2019}]{branzei2017query}
Simina Br{\^{a}}nzei and Noam Nisan.
\newblock Communication complexity of cake cutting.
\newblock In {\em {EC}}, page 525, 2019.

\bibitem[\protect\citeauthoryear{Caragiannis \bgroup \em et al.\egroup
  }{2011}]{caragiannis2011towards}
Ioannis Caragiannis, John~K. Lai, and Ariel~D. Procaccia.
\newblock Towards more expressive cake cutting.
\newblock In {\em {IJCAI}}, pages 127--132, 2011.

\bibitem[\protect\citeauthoryear{Chevaleyre \bgroup \em et al.\egroup
  }{2017}]{chevaleyre2017distributed}
Yann Chevaleyre, Ulle Endriss, and Nicolas Maudet.
\newblock Distributed fair allocation of indivisible goods.
\newblock {\em Artificial Intelligence}, 242:1--22, 2017.

\bibitem[\protect\citeauthoryear{Cohen \bgroup \em et al.\egroup
  }{2019}]{CohenLS19}
Michael~B. Cohen, Yin~Tat Lee, and Zhao Song.
\newblock Solving linear programs in the current matrix multiplication time.
\newblock In {\em {STOC}}, pages 938--942, 2019.

\bibitem[\protect\citeauthoryear{Cohler \bgroup \em et al.\egroup
  }{2011}]{cohler2011optimal}
Yuga~J. Cohler, John~K. Lai, David~C. Parkes, and Ariel~D. Procaccia.
\newblock Optimal envy-free cake cutting.
\newblock In {\em {AAAI}}, 2011.

\bibitem[\protect\citeauthoryear{Deligkas \bgroup \em et al.\egroup
  }{2021}]{DeligkasEGHO21}
Argyrios Deligkas, Eduard Eiben, Robert Ganian, Thekla Hamm, and Sebastian
  Ordyniak.
\newblock The parameterized complexity of connected fair division.
\newblock In {\em {IJCAI}}, pages 139--145, 2021.

\bibitem[\protect\citeauthoryear{Deng \bgroup \em et al.\egroup
  }{2012}]{deng2012algorithmic}
Xiaotie Deng, Qi~Qi, and Amin Saberi.
\newblock Algorithmic solutions for envy-free cake cutting.
\newblock {\em Operations Research}, 60(6):1461--1476, 2012.

\bibitem[\protect\citeauthoryear{Diestel}{2012}]{Diestel12}
Reinhard Diestel.
\newblock {\em Graph Theory, 4th Edition}.
\newblock Graduate texts in mathematics. Springer, 2012.

\bibitem[\protect\citeauthoryear{Dubins and Spanier}{1961}]{dubins1961cut}
Lester~E Dubins and Edwin~H Spanier.
\newblock How to cut a cake fairly.
\newblock {\em The American Mathematical Monthly}, 68(1P1):1--17, 1961.

\bibitem[\protect\citeauthoryear{Eiben \bgroup \em et al.\egroup
  }{2020}]{EibenGHO20}
Eduard Eiben, Robert Ganian, Thekla Hamm, and Sebastian Ordyniak.
\newblock Parameterized complexity of envy-free resource allocation in social
  networks.
\newblock In {\em {AAAI}}, pages 7135--7142, 2020.

\bibitem[\protect\citeauthoryear{Elkind \bgroup \em et al.\egroup
  }{2021a}]{ESS21graphical}
Edith Elkind, Erel Segal{-}Halevi, and Warut Suksompong.
\newblock Graphical cake cutting via maximin share.
\newblock In {\em {IJCAI}}, pages 161--167, 2021.

\bibitem[\protect\citeauthoryear{Elkind \bgroup \em et al.\egroup
  }{2021b}]{ElkindSS21land}
Edith Elkind, Erel Segal{-}Halevi, and Warut Suksompong.
\newblock Keep your distance: Land division with separation.
\newblock In {\em {IJCAI}}, pages 168--174, 2021.

\bibitem[\protect\citeauthoryear{Elkind \bgroup \em et al.\egroup
  }{2021c}]{ElkindSS21separation}
Edith Elkind, Erel Segal{-}Halevi, and Warut Suksompong.
\newblock Mind the gap: Cake cutting with separation.
\newblock In {\em {AAAI}}, pages 5330--5338, 2021.

\bibitem[\protect\citeauthoryear{Goldberg \bgroup \em et al.\egroup
  }{2020}]{goldberg2020contiguous}
Paul Goldberg, Alexandros Hollender, and Warut Suksompong.
\newblock Contiguous cake cutting: Hardness results and approximation
  algorithms.
\newblock {\em Journal of Artificial Intelligence Research}, 69:109--141, 2020.

\bibitem[\protect\citeauthoryear{Hosseini \bgroup \em et al.\egroup
  }{2020}]{DBLP:conf/ijcai/HosseiniIS20}
Hadi Hosseini, Ayumi Igarashi, and Andrew Searns.
\newblock Fair division of time: Multi-layered cake cutting.
\newblock In {\em {IJCAI}}, pages 182--188, 2020.

\bibitem[\protect\citeauthoryear{Igarashi and Meunier}{2021}]{IM21-layered}
Ayumi Igarashi and Fr{\'{e}}d{\'{e}}ric Meunier.
\newblock Envy-free division of multi-layered cakes.
\newblock {\em CoRR}, abs/2106.02262, 2021.

\bibitem[\protect\citeauthoryear{Igarashi and
  Peters}{2019}]{igarashi2019pareto}
Ayumi Igarashi and Dominik Peters.
\newblock Pareto-optimal allocation of indivisible goods with connectivity
  constraints.
\newblock In {\em {AAAI}}, pages 2045--2052, 2019.

\bibitem[\protect\citeauthoryear{Kurokawa \bgroup \em et al.\egroup
  }{2013}]{kurokawa2013cut}
David Kurokawa, John~K. Lai, and Ariel~D. Procaccia.
\newblock How to cut a cake before the party ends.
\newblock In {\em {AAAI}}, 2013.

\bibitem[\protect\citeauthoryear{Lipton \bgroup \em et al.\egroup
  }{2004}]{lipton2004approximately}
Richard~J Lipton, Evangelos Markakis, Elchanan Mossel, and Amin Saberi.
\newblock On approximately fair allocations of indivisible goods.
\newblock In {\em EC}, pages 125--131, 2004.

\bibitem[\protect\citeauthoryear{Marenco and Tetzlaff}{2014}]{marenco2014envy}
Javier Marenco and Tom{\'a}s Tetzlaff.
\newblock Envy-free division of discrete cakes.
\newblock {\em Discrete Applied Mathematics}, 164:527--531, 2014.

\bibitem[\protect\citeauthoryear{Megiddo and Papadimitriou}{1991}]{MegiddoP91}
Nimrod Megiddo and Christos~H. Papadimitriou.
\newblock On total functions, existence theorems and computational complexity.
\newblock {\em Theor. Comput. Sci.}, 81(2):317--324, 1991.

\bibitem[\protect\citeauthoryear{Menon and Larson}{2017}]{MenonL17}
Vijay Menon and Kate Larson.
\newblock Deterministic, strategyproof, and fair cake cutting.
\newblock In {\em {IJCAI}}, pages 352--358, 2017.

\bibitem[\protect\citeauthoryear{Moulin}{2004}]{moulin2004fair}
Herv{\'e} Moulin.
\newblock {\em Fair division and collective welfare}.
\newblock MIT press, 2004.

\bibitem[\protect\citeauthoryear{Procaccia}{2013}]{procaccia2013cake}
Ariel~D Procaccia.
\newblock Cake cutting: Not just child's play.
\newblock {\em Commun. of the ACM}, 56(7):78--87, 2013.

\bibitem[\protect\citeauthoryear{Robertson and Webb}{1998}]{robertson1998cake}
Jack Robertson and William Webb.
\newblock {\em Cake-cutting algorithms: Be fair if you can}.
\newblock CRC Press, 1998.

\bibitem[\protect\citeauthoryear{Seddighin \bgroup \em et al.\egroup
  }{2019}]{seddighin2019expand}
Masoud Seddighin, Majid Farhadi, Mohammad Ghodsi, Reza Alijani, and Ahmad~S
  Tajik.
\newblock Expand the shares together: Envy-free mechanisms with a small number
  of cuts.
\newblock {\em Algorithmica}, 81(4):1728--1755, 2019.

\bibitem[\protect\citeauthoryear{Segal{-}Halevi and
  Suksompong}{2021}]{DBLP:journals/tamm/Segal-HaleviS21}
Erel Segal{-}Halevi and Warut Suksompong.
\newblock How to cut a cake fairly: {A} generalization to groups.
\newblock {\em Am. Math. Mon.}, 128(1):79--83, 2021.

\bibitem[\protect\citeauthoryear{Segal-Halevi \bgroup \em et al.\egroup
  }{2016}]{segal2016waste}
Erel Segal-Halevi, Avinatan Hassidim, and Yonatan Aumann.
\newblock Waste makes haste: bounded time algorithms for envy-free cake cutting
  with free disposal.
\newblock {\em Transactions on Algorithms}, 13(1):1--32, 2016.

\bibitem[\protect\citeauthoryear{Simonov \bgroup \em et al.\egroup
  }{2019}]{SimonovFGP19}
Kirill Simonov, Fedor~V. Fomin, Petr~A. Golovach, and Fahad Panolan.
\newblock Refined complexity of {PCA} with outliers.
\newblock In {\em {ICML}}, pages 5818--5826, 2019.

\bibitem[\protect\citeauthoryear{Stromquist}{1980}]{stromquist1980cut}
Walter Stromquist.
\newblock How to cut a cake fairly.
\newblock {\em Am. Math. Mon.}, 87(8):640--644, 1980.

\bibitem[\protect\citeauthoryear{Su}{1999}]{edward1999rental}
Francis~Edward Su.
\newblock Rental harmony: Sperner's lemma in fair division.
\newblock {\em Am. Math. Mon.}, 106(10):930--942, 1999.

\bibitem[\protect\citeauthoryear{Suksompong}{2019}]{suksompong2019fairly}
Warut Suksompong.
\newblock Fairly allocating contiguous blocks of indivisible items.
\newblock {\em Discrete Applied Mathematics}, 260:227--236, 2019.

\bibitem[\protect\citeauthoryear{Warren}{1968}]{Waren68}
Hugh~E. Warren.
\newblock Lower bounds for approximation by nonlinear manifolds.
\newblock {\em Transactions of the American Mathematical Society},
  133(1):167--178, 1968.

\end{thebibliography}

\end{document}